\newcommand{\col}{\mathrm{col}}
\newcommand{\Tr}{\mathrm{Tr}}
\DeclareMathOperator*{\argmin}{argmin}
\newtheorem*{remark}{Remark}
\newtheorem{lemma}{Lemma}
\begin{document}
\title{Signal Subspace Methods Which Are Robust to Impulsive Noise}
\author{Robert L. Bassett \& Micah Y. Oh
\thanks{R. Bassett and M. Oh are with the Naval Postgraduate School's Operations Research department. Both authors acknowledge support from ONR grants N0001421WX00142 and N0001423WX01316}}


\maketitle

\begin{abstract}
We consider the problem of estimating a signal subspace in the presence of interference that contaminates some proportion of the received observations. Our emphasis is on detecting the contaminated observations so that the signal subspace can be estimated with the contaminated observations discarded. To this end, we employ a signal model which explicitly includes an interference term that is distinct from environmental noise. To detect when the interference term is nonzero, we estimate the interference term using an optimization problem with a sparsity-inducing group SLOPE penalty which accounts for simultaneous sparsity across all channels of the multichannel signal. We propose an iterative algorithm which efficiently computes the observations estimated to contain interference. Theoretical support  for the accuracy of our interference estimator is provided by bounding its false discovery rate, the expected proportion of uncontaminated observations among those estimated to be contaminated. Finally, we demonstrate the empirical performance of our contributions in a number of simulated experiments.
\end{abstract}


\section{Introduction}
\IEEEPARstart{E}{stimating} unknown signal or sensor parameters is an essential part of many signal processing problems. When measurements are taken using sensor arrays, the signal subspace can be used to estimate many parameters of practical interest, including signal frequencies, directions of arrival and sensor array geometry. In this paper, we contribute a new technique for estimating a signal subspace in the presence of sparse interference. Throughout, sparsity is measured with respect to time, so a sparse interferer is one which contaminates some bounded proportion of received observations across any subset of the sensor array's channels. An important example of sparse interference is impulsive noise\footnote{also called \emph{impulse noise} in some references}, which refers to high-intensity short-duration signals \cite{henderson1986impulse, zoubir2012robust}. Impulsive noise is commonly found in many application areas, including radar \cite{abramovich1999impulsive}, 
sonar \cite{etter2018underwater}, 
seismology (due to seismic events), 
and telecommunications \cite{middleton1999non}. Since impulsive noise typically does not contaminate all observations, it is an example of sparse interference.


We modify a conventional noise model, in which received observations are a signal with additive noise, by including an additional interference term, and then estimate this term in an optimization problem that features a data fidelity term and a sparsity-inducing regularizer. Once we have identified observations for which the interference term is estimated to be nonzero, we discard those observations and estimate the signal subspace using the remaining observations. In general, this is a difficult problem because identifying a size $k$ subset of observations to discard due to interference is NP-hard \cite{davis1997adaptive}. Sparsity-inducing regularizers can be used to tractably approximate these difficult subset selection problems, and a number of recent results establish rigorous approximation guarantees for a variety of different regularizers \cite{wainwright2019high, tibshirani1996regression, zou2005regularization, yuan2006model, zou2006adaptive, belloni2011square, bogdan2015slope}. Our contributions build on this literature by applying a sparsity-inducing regularizer to signal subspace estimation. By exploiting the unique structure of the design matrix in signal subspace estimation, we are able to develop fast algorithms that compute the estimator much more quickly than general-purpose solvers. In addition to these computational results, we provide theoretical support for the accuracy of the estimator by bounding its false discovery rate, the expected of ratio of incorrectly discarded observations to total discarded observations.


Our approach differs from existing work on signal subspace estimation in a number of ways. Typical models assume that the complex output of an $m$ element array $x(t) \in \mathbb{C}^{m}$, in IQ format, is related to the signal amplitude $s(t) \in \mathbb{C}^{d}$ from $d$ sources via
\begin{equation}\label{easy_model}
x(t) = A_{0}\, s(t) + \epsilon(t),
\end{equation}
where $A_{0} \in \mathbb{C}^{m \times d}$ is a matrix with columns the array response vectors for each of the $d$ sources. The $\epsilon(t) \in \mathbb{C}^{m}$ term is additive noise. In the nonrobust setting, $\epsilon(t)$ is often assumed to be distributed as a mean zero, isotropic complex Gaussian and independent across the time index $t$. The \emph{signal subspace} is the column space of $A_{0}$, denoted $\col(A_{0})$. Denote by $X$ an $m \times n$ matrix consisting of $n$ observations from the model \eqref{easy_model}. If the signal is deterministic but unknown, the signal subspace can be estimated using deterministic maximum likelihood. This leads to the following optimization problem for subspace fitting, first introduced in \cite{schmidt1981signal},
\begin{equation} \label{opt_problem1}
\min_{\substack{ A \in \mathbb{C}^{m \times d}\\ S \in \mathbb{C}^{d \times n}}} \left\|X - A S \right\|_{F}^{2}.
\end{equation}

The authors of \cite{cadzow1988high} observed that substituting other functions of the observations for the $X$ term in \eqref{opt_problem1} results in a framework that generalizes many methods for estimating a signal subspace. For example, the following problem removes the dependence on $n$ in \eqref{opt_problem1} and produces an identical estimate for $\col(A_{0})$,
\begin{equation} \label{opt_problem2}
\min_{\substack{ A \in \mathbb{C}^{m \times d}\\ S \in \mathbb{C}^{d \times m}}} \left\|\hat{\Sigma}_{X}^{1/2} - A S \right\|_{F}^{2},
\end{equation}
where $\hat{\Sigma}_{X}$ is the sample covariance of $X$ and $\hat{\Sigma}_{X}^{1/2}$ a Hermitian square root.

Other methods can be formulated by substituting different functions of the sample covariance matrix $\hat{\Sigma}_{X}$ into the optimization problem \eqref{opt_problem2}. The multidimensional version of the popular MUSIC algorithm \cite{schmidt1981signal, paulraj}, can be written
\begin{equation} \label{opt_problem_music}
\min_{\substack{ A \in \mathbb{C}^{m \times d}\\ S \in \mathbb{C}^{d \times d}}} \left\|\hat{E}_{S} - A S \right\|_{F}^{2},
\end{equation}
where $\hat{E}_{S} \in \mathbb{C}^{m \times d}$ is a matrix of the $d$ largest eigenvectors of $\hat{\Sigma}_{X}$. Because many signal subspace methods depend on $\hat{\Sigma}_{X}$, efforts to robustly estimate a signal subspace primarily focus on robustly estimating the covariance matrix \cite{zoubir2012robust}. Since the covariance matrix is a fundamental quantity in statistics, many robust techniques have been developed and then used to robustify signal subspace methods. Examples include M-estimation \cite{huber1981robust, zoubir2012robust}, sign covariance matrix \cite{oja1997robust, visuri2001subspace}, minimum volume ellipsoid \cite{rousseeuw1984least}, and minimum covariance determinant \cite{rousseeuw1984least}.

Our approach differs from this common paradigm for robustifying signal subspace methods by returning to the probabilistic model \eqref{easy_model}. We include an interference term in \eqref{easy_model} and modify the least-squares estimator \eqref{opt_problem1} accordingly. Similar to the signal of interest $s(t)$ in deterministic maximum likelihood, we assume that this interfering signal is deterministic yet unknown. By assuming that this interference term is sparse across time, we are able to distinguish it from the noise term and signal of interest by forming an optimization problem which modifies \eqref{opt_problem1} to include a sparsity-inducing penalty.

We use the group SLOPE norm as the sparsity-inducing penalty when formulating the estimator. First introduced in \cite{gossmann2015identification} and \cite{brzyski2019group}, the group SLOPE norm encourages simultaneous sparsity across groups of estimated variables. Group SLOPE is an extension of the popular group LASSO penalty \cite{yuan2006model}, which also encourages selected groups of variables to be zero simultaneously. However, group SLOPE has an advantage over group LASSO because it is adaptive, in the sense that the penalty applied to each group depends on its $\ell^{2}$ norm relative to the norms of all other groups, with larger group norms being penalized more heavily. Moreover, the group SLOPE norm is convex, so minimizers of many group SLOPE regularized optimization problems can be computed efficiently. In this paper, we encourage group sparsity across columns of a matrix $\Delta \in \mathbb{C}^{m \times n}$, where $m$ is the number of channels in the sensor array and the $n$ columns of $\Delta$ are each an estimate of an interference term. This column-wise grouping encourages sparsity across all rows of a column of $\Delta$ simultaneously, and a column of $\Delta$ set to zero indicates that no interference is estimated across any channel in that sample.

The rest of this paper proceeds as follows. In the next subsections of this section, we introduce our signal model, a robust estimator for the signal subspace using this model, and summarize our main results. Section \ref{sec:computation} focuses on computing the estimator, where we present an iterative algorithm which exploits the unique structure of sparsity-inducing penalties applied to signal subspace estimation. We also present a convergence result for this algorithm. Section \ref{sec:error} examines the statistical error of our estimator. We provide results for tuning the estimator to accommodate different levels of tolerance for interference and provide a bound on its false discovery rate--the expected proportion of incorrectly identified observations among those estimated to contain interference. In section \ref{sec:experiments}, we conduct a number of simulations to demonstrate the performance of our estimator under a variety of conditions. Section \ref{sec:conclusion} concludes the paper. Proofs which are not included in the main text are relegated to the Supplementary Material.

\textbf{Notation}. Before proceeding, we establish some notation that we will use throughout the paper. We denote matrices by capital letters (both Latin and Greek) and column vectors by lower case letters. We denote by $\|\cdot\|_{F}^{2}$ the squared Frobenius norm and by $\|\cdot\|_{2}^{2}$ the squared Euclidean norm. Given a matrix $X$, we denote the column space of $X$ by $\col(X)$ and the matrix that orthogonally projects onto $\col(X)$ by $P_X$. Denote by $x_{i}$ the $i$th entry of a vector $x$ and, when $x_{i}$ is  complex-valued, $|x|_{i}$ the modulus of its $i$th entry. For a real-valued vector $x$, we denote by $x_{(i)}$ the $i$th order statistic of $x$, so that $(x_{(1)}, x_{(2)}, ... )$ gives the entries of $x$ in nondecreasing order. We denote the set of natural numbers $\{1,\dots,n\}$ by $[n]$. The transpose and conjugate-transpose of a matrix $X$ are denoted $X^{\top}$ and $X^{*}$, respectively. For $X \in \mathbb{C}^{m \times n}$, we denote by $\llbracket X \rrbracket \in \mathbb{R}^{n}$ the vector consisting of the Euclidean norms of each column of $X$. We also denote the $i$th column of $X$ as $X_{i}$. We denote the multivariate normal distribution, with mean $\mu \in \mathbb{R}^{n}$ and symmetric covariance matrix $\Sigma \in \mathbb{R}^{n \times n}$ by $N(\mu, \Sigma)$, and we denote the complex multivariate normal distribution with mean $\mu \in \mathbb{C}^{n}$ and Hermitian covariance matrix $\Sigma \in \mathbb{C}^{n \times n}$ by $CN(\mu, \Sigma)$. By $\mathrm{I}_{n}$, we denote the $n \times n$ identity matrix. Finally, for $x \in \mathbb{C}^{n}$ and $\lambda\in \mathbb{R}^{n}$, we denote by $\|x\|_{\sharp, \lambda}$ the SLOPE norm of $x$, defined as the value $\|x\|_{\sharp, \lambda} = \sum_{i=1}^{n} \lambda_{(i)} |x|_{(i)}$.


\subsection{Signal Model}

Assume the following narrowband signal model,
\begin{equation}\label{our_model}
x(t) = A_{0} \, s(t) + \epsilon(t) + \delta(t).
\end{equation}
The received signal $x(t) \in \mathbb{C}^{m}$ is the output of an $m$ element array in IQ format. The signal of interest $s(t) \in \mathbb{C}^{d}$ contains $d$ sources, where $d < m$. The time-independent matrix $A_{0} \in \mathbb{C}^{m \times d}$ maps the signal $s(t)$ to array measurements based on array geometry, signal frequency, and the time-independent directions of arrival of the signals from each source. We assume throughout that $A_{0}$ is full column rank. The noise term $\epsilon(t)$ is a random $m$ dimensional signal for which $\epsilon(t_{1})$ and $\epsilon(t_{2})$ are independent whenever $t_{1} \neq t_{2}$. The interference term $\delta(t) \in \mathbb{C}^{m}$ is a deterministic signal that is sparse in time, meaning that it is frequently the zero vector.

\subsection{SLOPE and Group SLOPE}

The assumption that $\delta(t)$ in \eqref{our_model} is sparse leads to the following goal: given a set of observations from the sensor array, find the subset of observations for which $\delta(t)$ is nonzero. Since these observations feature an interfering signal of unknown nature, they should not be used to estimate properties of the signal of interest. Therefore estimating the sparsity pattern of $\delta(t)$--the values of $t$ for which it is nonzero--is the primary consideration.

Sparse estimation methods are often computationally burdensome because the sparsity pattern is selected from all subsets of observations, a set which grows exponentially in the number of observations. A popular method to reduce these computational challenges is to estimate sparsity using an optimization problem which features a sparsity-inducing regularizer in addition to a data-fidelity term \cite{wainwright2019high}. The most widely known of these regularized estimation methods is likely the LASSO \cite{tibshirani1996regression}, which adds an $\ell^{1}$ penalty to an objective function (often least-squares) which incentivizes good model fit. Since the LASSO penalty is convex, adding it to a convex objective function results in a convex optimization problem that can be solved quickly for large numbers of observations.

Despite its strengths, using LASSO for sparse estimation does not naturally yield finite sample bounds on the false positive or false negative error rates of the estimator. To overcome these limitations, \cite{bogdan2015slope} introduced the SLOPE penalty, which, for a nonnegative penalty vector $\lambda \in \mathbb{R}^{n}$ having at least one positive entry, is defined as 
\begin{equation}\label{slope_def}
\|x\|_{\sharp, \lambda} = \sum_{i=1}^{n} \lambda_{(i)} |x|_{(i)},
\end{equation}
where $x$ is any vector in $\mathbb{C}^{n}$ and $|x|_{(i)}$ the $i$th order statistic of $(|x_{1}|,\dots, |x_{n}|)$. By taking all entries of $\lambda$ the same, the group SLOPE definition in \eqref{grpSLOPE} reduces to group LASSO. However, these varied $\lambda$ values are key to group SLOPE's advantage over group LASSO as an adaptive penalty, because it results in larger column norms being penalized more heavily.

In the restricted setting of least squares estimation with orthogonal design matrices and additive Gaussian noise, the SLOPE penalty can be shown to control the expected false positive rate of the estimator \cite{bogdan2015slope}. The SLOPE penalty is also convex (see \cite[Proposition 1.2]{bogdan2015slope}), so it retains the attractive computational benefits of the LASSO penalty. Additionally, the authors give a fast algorithm for computing the SLOPE penalty's proximal operator, a computational primitive used in many nonsmooth optimization methods \cite{parikh2014proximal}. The SLOPE penalty's proximal operator is denoted $\mathrm{Prox}_{\|\cdot\|_{\lambda, \sharp}}$ and defined as 
\begin{equation}
\mathrm{Prox}_{\|\cdot\|_{\lambda, \sharp}}(x) = \argmin_{w \in \mathbb{R}^{d}} \; \|x\|_{\lambda, \sharp} + \frac{1}{2} \|x - w \|_{2}^{2}. 
\end{equation} 

In the context of array signal processing, observations are vector valued, so we must use a vector-valued extension to the SLOPE penalty. Let $X \in \mathbb{C}^{m \times n}$ and $\lambda \in \mathbb{R}^{n}$ be a vector of nonnegative values. The group SLOPE norm of $X$, with penalty vector $\lambda$, is defined as the SLOPE norm applied to the vector $\llbracket X \rrbracket$ consisting of the columnwise Euclidean norms of $X$. That is,
\begin{equation}
\left\|\llbracket X \rrbracket \right\|_{\sharp, \lambda} = \sum_{i=1}^{n} \lambda_{(i)} \llbracket X \rrbracket_{(i)}.\label{grpSLOPE}
\end{equation}
The group SLOPE penalty was first introduced in \cite{gossmann2015identification}. In \cite{brzyski2019group} a closely related penalty was shown to satisfy a similar false discovery rate bound as the (non-group) SLOPE penalty, again under the restrictive setting of Gaussian noise and least squares estimation with an orthogonal design matrix.

\subsection{Robust Estimator for the Signal Subspace}

Let $X \in \mathbb{C}^{m \times n}$ denote $n$ observations of $x(t)$ from \eqref{our_model} at times $t_{1},...,t_{n}$, where the observations are stacked as columns. Denote by $S \in \mathbb{C}^{d \times n}$ and $\Delta \in \mathbb{C}^{m \times n}$ decision variables with columns that give the values of the signal of interest $s(t)$ and interference term $\delta(t)$, respectively, for each of the $n$ observations in $X$. For a penalty vector $\lambda$, which will be specified in section \ref{sec:error}, we form estimators $\hat{A}$, $\hat{S}$, and $\hat{\Delta}$ as minimizers of the following group SLOPE regularized least squares problem,
\begin{equation} \label{our_problem}
\min_{A, S, \Delta} \left\|X - A S - \Delta \right\|_{F}^{2} + \left\|\llbracket \Delta \rrbracket \right\|_{\sharp, \lambda}.
\end{equation}
The least squares portion of problem \eqref{our_problem} is a data fidelity term which encourages $\hat{A}$, $\hat{S}$, and $\hat{\Delta}$ to be chosen so that $\hat{A}\, \hat{S} + \hat{\Delta}$ is close to $X$. The group SLOPE term is a regularizer that encourages solution structure based on a priori information on the signal. In this case, the group SLOPE regularizer encourages solutions for which $\hat{\Delta}$ is column sparse, based on the assumption in \eqref{our_model} that $\delta(t)$ is frequently the zero vector.


\subsection{Summary of Main Results}

We solve \eqref{our_problem} using alternating minimization in $\Delta$ and $A$, since $S$ can be eliminated as a decision variable. Alternating minimization is especially promising since its per-iteration complexity is small. We also prove the convergence of alternating minimization to a critical point of problem \eqref{our_problem}.

\textbf{Contribution 1} (Computation).
The sequence of iterates produced by applying alternating minimization in $\Delta$ and $A$ to \eqref{our_problem} is guaranteed to have a limit point, and each limit point is a local minimum of \eqref{our_problem}. Each iteration of alternating minimization (updating both $A$ and $\Delta$) requires $O(n \log(n) + n m^2 + m^3)$ flops.

In addition to considering the numerical error associated with computing the estimator in \eqref{our_problem}, we also consider its statistical error--how well it estimates the unknown quantity of interest. We focus on detecting which observations feature nonzero interference, and measure the estimator's performance by bounding its false discovery rate--the expected proportion of observations which are incorrectly identified as containing interference.

\textbf{Contribution 2} (Statistical Error).
Given a desired false discovery rate $q \in (0,1)$ and a fixed estimate of the signal subspace, there is a choice of the penalty vector $\lambda$ for which the estimator \eqref{our_problem} has false discovery rate bounded by $q$. This $\lambda$ vector depends only on $q$ and the root mean square amplitude of $A_{0} \, s(t) + \epsilon(t)$, the received signal without interference, contained within the estimated noise subspace. Unlike previous false discovery rate results using a group SLOPE penalty \cite{brzyski2019group}, our result holds for non-Gaussian noise and without additional restrictions on the observations.

\section{Computation}\label{sec:computation}

In this section we propose and analyze an iterative algorithm for computing minimizers of \eqref{our_problem}. We prove that sequences produced by the algorithm have limit points, and that these limit points are local minimizers of \eqref{our_problem}.

The fact that these minimizers are local and not global is a result of the nonconvexity in \eqref{our_problem}. However, the local nature of solutions can be advantageous in situations where the signal of interest and interfering signal are not clearly defined a priori. For example, if a practitioner wishes to estimate the signal subspaces of two sparse signals, then by running the algorithm with different initial values the iterates can be made to converge to different critical points which give the signal subspace for each of the signals.

To formulate our algorithm for solving \eqref{our_problem}, we first eliminate the $S$ variable from \eqref{our_problem} by noting that it can be rewritten
\begin{align}
\min_{A, \Delta}\left\{ \min_{S} \left\{\left\|X - A S - \Delta \right\|_{F}^{2}  \right\} + \left\| \llbracket \Delta \rrbracket \right\|_{\sharp, \lambda} \right\}. \label{nested_min}
\end{align}
The inner minimization in $S$ has the closed-form minimizer $\hat{S} = (A^{*} A)^{-1} A^{*} (X - \Delta)$. Substituting this value into \eqref{nested_min}, we have
\begin{equation} \label{red_problem}
\min_{A, \Delta} \left\|(I - P_{A})(X - \Delta) \right\|_{F}^{2} + \left\|\llbracket \Delta \rrbracket \right\|_{\sharp, \lambda},
\end{equation}
where $P_{A} = A (A^{*} A)^{-1} A^{*}$. Since there is a one-to-one correspondence between subspaces and projection matrices \cite{golub2013matrix}, it is natural that problem \eqref{red_problem} uses $P_{A}$ in its objective function instead of $A$. For a given $P_{A}$, the $A$ for which $P_{A} = A (A^{*} A)^{-1} A^{*}$ is nonunique, because any $m \times d$ matrix with left singular vectors that span $\mathrm{col}(P_{A})$ produces the same projection matrix. For this reason, we refer to optimal choices of either $A$ or $P_{A}$ as \emph{solutions} to \eqref{red_problem}, with the understanding that the correspondence between $A$ and $P_{A}$ is well-understood despite being nonunique.

 
We apply alternating minimization to problem \eqref{red_problem}. For fixed $A$, minimization in $\Delta$ is a convex program because the group SLOPE norm is convex. Through a series of reductions, we show that this convex program can be solved by computing a single evaluation of the SLOPE norm's proximal operator, which can be computed in $n \log n$ flops using the algorithm in \cite{bogdan2015slope}. On the other hand, for fixed $\Delta$, minimization in $A$ results in $I - P_{A}$ chosen as the $d$ largest singular vectors of $X - \Delta$. We formalize these results in the following theorems, the proofs of which are postponed to the supplementary material.

\begin{theoremrep}\label{thm:prox}
The problem
\begin{equation}
\min_{\Delta} \left\|(I - P_{A})(X - \Delta) \right\|_{F}^{2} + \left\| \llbracket \Delta \rrbracket \right\|_{\sharp, \lambda}
\end{equation}
has minimizer $\hat{\Delta}$ that can be defined columnwise as
\begin{equation} \label{columnwise_def}
\hat{\Delta}_{i} = \frac{\hat{c}_{i}}{\|(I - P_{A}) X_{i} \|_{2}} (I - P_{A}) X_{i}, \quad \forall i \in [n].
\end{equation}
The vector $\hat{c} \in \mathbb{R}^{n}$ in \eqref{columnwise_def} is defined as
\begin{equation}
\hat{c} = \mathrm{Prox}_{\|\cdot\|_{\sharp, \lambda}}(\llbracket(I - P_A) X\rrbracket),
\end{equation}
the proximal operator of the SLOPE norm applied to the column norms of $(I - P_A) X$.
\end{theoremrep}
\begin{proof}
Consider our objective function when minimizing over $\Delta$ for fixed $A$,
\begin{equation}\label{1st_in_prox_proof}
\min_{\Delta} \sum_{i=1}^{n} \left\|(I - P_{A})(X_{i} - \Delta_{i}) \right\|_{F}^{2} + \sum_{i=1}^{n} \lambda_{(i)} \llbracket \Delta \rrbracket_{(i)}.
\end{equation}
Denote by $\Delta_{(i)}$ the ith largest column of $\Delta$, with the ordering induced by column norms. Applying the Pythagorean theorem, \eqref{1st_in_prox_proof} becomes
\begin{equation}
\min_{\Delta} \sum_{i=1}^{n} \left\|(I - P_{A})(X_{i} - \Delta_{i}) \right\|_{F}^{2} + \sum_{i=1}^{n} \lambda_{(i)} \sqrt{\left\|P_{A} \Delta_{(i)} \right\|_{2}^{2}  + \left\|(I - P_{A})\Delta_{(i)} \right\|_{2}^{2} },
\end{equation}
from which we see that a necessary condition for minimization in $\Delta$ is $P_{A} \Delta_{(i)} = 0$ for all $i \in [n]$. Thus \eqref{1st_in_prox_proof} reduces to
\begin{equation}\label{3rd_in_prox_proof}
\min_{\Delta} \sum_{i=1}^{n} \left\|(I - P_{A})(X_{i} - \Delta_{i}) \right\|_{F}^{2} + \sum_{i=1}^{n} \lambda_{(i)} \llbracket (I - P_{A}) \Delta \rrbracket_{(i)}.
\end{equation}
Introduce the auxiliary vector $c \in \mathbb{R}^{n}$ with $c_{(i)} = \|(I - P_{A}) \Delta_{(i)}\|$, and rewrite \eqref{3rd_in_prox_proof} as the constrained problem
\begin{align}\label{4th_in_prox_proof}
\min_{c, \Delta} & \quad \sum_{i=1}^{n} \left\|(I - P_{A})(X_{i} - \Delta_{i}) \right\|_{F}^{2} + \sum_{i=1}^{n} \lambda_{(i)} \llbracket (I - P_{A}) \Delta \rrbracket_{(i)}\\
\text{s.t.} & \quad c_{(i)} = \|(I - P_{A}) \Delta_{i} \|_{2}, \quad \quad \forall i \in [n].
\end{align}
Expanding the quadratic and neglecting constant terms, we have
\begin{equation} \label{program_in_c}
\min_{c, \Delta} \sum_{i=1}^{n} c_{i}^{2} - 2 \Re\left[\left((I - P_{A}) X_{i}\right)^{*} \left((I - P_{A}) \Delta_{i} \right)\right] + \sum_{i=1}^{n} \lambda_{(i)} c_{(i)}
\end{equation}
For fixed $c$, minimizing in $\Delta$ using the Cauchy-Schwarz inequality gives
\begin{equation}
(I - P_{A}) \Delta_{i} = \frac{c_{i}}{\|(I - P_{A}) X_{i}\|_{2}}(I - P_{A}) X_{i}.
\end{equation}
Since we have previously shown that $P_{A} \Delta_{i} = 0$ is a necessary condition for $\Delta$ to minimize \eqref{1st_in_prox_proof}, we conclude that
\begin{equation} \label{delta_conclusion}
\Delta_{i} = \frac{c_{i}}{\|(I - P_{A}) X_{i}\|_{2}}(I - P_{A}) X_{i}.
\end{equation}
Substituting \eqref{delta_conclusion} into \eqref{program_in_c} yields
\begin{equation}
\min_{c \in \mathbb{R}^{n}} \sum_{i=1}^{n} c_{i}^{2} - 2 c_{i} \left\|(I - P_{A}) X_{i} \right\|_{2} + \sum_{i=1}^{n} \lambda_{(i)} c_{(i)},
\end{equation}
which, completing the square, is equivalent to the following,
\begin{equation}\label{prox_conclusion}
\min_{c \in \mathbb{R}^{n}} \left\|c - \llbracket (I - P_{A}) X \rrbracket \right\|_{2}^{2} + \|c \|_{\sharp, \lambda}.
\end{equation}
The problem \eqref{prox_conclusion} is the proximal operator of $\|\cdot\|_{\sharp, \lambda}$ evaluated at $\llbracket (I - P_{A}) X \rrbracket$. Combined with \eqref{delta_conclusion}, this proves the result.
\end{proof}

\begin{theoremrep}\label{thm:tsvd}
Let $\tilde{U} \tilde{\Sigma} \tilde{V}^{*}$ be a rank-$d$ truncated singular value decomposition of $X - \Delta$. Then $\tilde{U}$ is a minimizer of the problem
\begin{equation}\label{prob_in_A}
\min_{A} \left\|(I - P_{A})(X - \Delta) \right\|_{F}^{2}.
\end{equation}
\end{theoremrep}
\begin{appendixproof}
Denote the full SVD of $X-\Delta$ by $U \Sigma V^{*}$, and the $i$th entry along the diagonal of $\Sigma$ by $\sigma_{i}$. Note that plugging in $\tilde{U}$ to this objective yields 
\begin{align}
\left\|(I - \tilde{U}(\tilde{U}^{*} \tilde{U})^{-1} \tilde{U}^{*} )(X - \Delta) \right\|_{F}^{2}\\
= \left\| (I - \tilde{U} \tilde{U}^{*} ) U \Sigma V^{*} \right\|_{F}^{2}\\
= \left\| (I - \sum_{i=1}^{d} U_{i} U_{i}^{*} ) (\sum_{i=1}^{m} U_{i} \sigma_{i} V_{i}^{*}) \right\|_{F}^{2}\\
= \left\| \sum_{i=1}^{m} U_{i} \sigma_{i} V_{i}^{*} - \sum_{i=1}^{d} U_{i} \sigma_{i} V_{i}^{*} \right\|_{F}^{2}\\
= \left\| \sum_{i=d+1}^{m} U_{i} \sigma_{i} V_{i}^{*}  \right\|_{F}^{2}\\
= \mathrm{Tr}\left[ (\sum_{i=d+1}^{m} U_{i} \sigma_{i} V_{i}^{*} )(\sum_{i=d+1}^{m} V_{i} \sigma_{i} u_{i}^{*}) \right]\\
= \Tr\left[\sum_{i=d+1}^{m} U_{i} \sigma_{i}^{2} U_{i}^{*}\right]\\
= \sum_{i=d+1}^{m} \Tr\left[ \sigma_{i}^{2} U_{i}^{*}U_{i} \right]\\
=\sum_{i=d+1}^{m}\sigma_{i}^{2}.
\end{align}
Therefore,
\begin{equation}
\min_{A \in \mathbb{C}^{m \times d}} \left\|(I - P_{A})(X - \Delta) \right\|_{F}^{2}  \leq \sum_{i=d+1}^{m} \sigma_{i}^{2}.
\end{equation}

We will next show that 
\begin{equation}
\min_{A \in \mathbb{C}^{m \times d}} \left\|(I - P_{A})(X - \Delta) \right\|_{F}^{2} \geq \sum_{i=d+1}^{m} \sigma_{i}^{2},
\end{equation}
which demonstrates that $\tilde{U}$ minimizes \eqref{prob_in_A}.

Recall that an orthogonal projection matrix has the same rank as the dimension of the space it projects onto. Also, for any conformable matrices $A$ and $B$, $\mathrm{rank}(AB) \leq \mathrm{rank}(A)$. Then by the Eckhart-Young Theorem \cite{golub2013matrix},
\begin{align}
& \min_{A \in \mathbb{C}^{m \times d}} \left\|(I - P_{A})(X - \Delta) \right\|_{F}^{2} \\
= & \min_{A \in \mathbb{C}^{m \times d}} \left\|(X - \Delta) - P_{A}(X - \Delta) \right\|_{F}^{2} \\
\geq &\min_{\substack{B \in \mathbb{C}^{m \times n}\\\mathrm{rank}(B) \leq d}} \left\|(X - \Delta) - B \right\|_{F}^{2}\\
= &\sum_{i=d+1}^{m} \sigma_{i}^{2}.
\end{align}
This proves the result.
\end{appendixproof}

Algorithm \ref{altmin} gives our proposed alternating minimization algorithm for solving \eqref{red_problem}, including the simplifications of each of the subproblems provided by Theorems \ref{thm:prox} and \ref{thm:tsvd}. Since computation of the SVD (without truncation) requires
$O(m^2 n + m^3)$ flops \cite{golub2013matrix}, the proximal algorithm of the SLOPE norm from \cite{bogdan2015slope} requires $O(n \log n)$ flops, and the matrix multiplies require $O(m^2 n)$ flops, we conclude that each iteration of Algorithim \ref{altmin} requires $O(n \log n + n m^2 + m^3)$ flops. In practice $n \gg m$, so the $n \log n$ dependence on $n$ is an attractive feature of the algorithm's per-iteration complexity.


\begin{algorithm}
    \caption{Alternating minimization algorithm for \eqref{red_problem}.}
    \label{altmin}
    \begin{algorithmic}[1] 
        \Procedure{AltMin}{X}
	    \State $\hat{\Delta} \gets 0$
            \Repeat
		\State $\hat{A} \gets \textproc{SVD}(X - \hat{\Delta}, d)$ \Comment{Assign $d$ leading left singular vectors to $A$.} 
		\State $P_{\hat{A}} \gets \hat{A} \, \hat{A}^{*}$ 
		\State $\hat{\Delta} \gets \textproc{Prox}_{\|\cdot\|_{\sharp, \lambda}}(\llbracket (I- P_{\hat{A}}) X \rrbracket)$ \Comment{Apply the SLOPE prox to column norms of $(I - P_{\hat{A}}) X$.}
            \Until{convergence}
            \State \textbf{return} $ P_{\hat{A}}, \hat{\Delta}$
        \EndProcedure
    \end{algorithmic}

\vspace{.15cm}

    \begin{algorithmic}[1] 
        \Function{SVD}{$M$, $k$}
	  \State \textbf{return} $\tilde{U}$, where $\tilde{U} \tilde{\Sigma} \tilde{V}^{*}$ is the rank-$k$ truncated singular value decomposition of $M$
        \EndFunction
    \end{algorithmic}

\vspace{.15cm}
    \begin{algorithmic}[1] 
        \State \textbf{function} $\textproc{Prox}_{\|\cdot\|_{\sharp, \lambda}}(w)$
	\State \;\,\, \, \textbf{return} The proximal operator of the SLOPE norm, $\|\cdot\|_{\sharp, \lambda}$, applied to $w$. See \cite{bogdan2015slope}.
	\State \textbf{end function}
    \end{algorithmic}
\end{algorithm}


The initialization of $\hat{\Delta} = 0 $ in Algorithm \ref{altmin} is one of many sensible options. To estimate the interference term of a nondominant signal, one could instead initialize $\hat{A}$ to any matrix for which $\col(\hat{A})$ approximates the signal subspace of interest. The $\hat{\Delta}$ term would be updated to detect interference for that given $\hat{A}$ matrix, and then both $\hat{A}$ and $\hat{\Delta}$ continue to be updated until the algorithm converges. If $\hat{\Delta}^{k}$ and $\hat{A}^{k}$ give the values of $\hat{\Delta}$ and $\hat{A}$ after $k$ iterations of Algorithm \ref{altmin}, then a natural stopping condition is to specify some tolerance $\eta$ and terminate the algorithm when 
\begin{equation} \label{stop_cond}
\|\hat{\Delta}^{k+1} - \hat{\Delta}^{k}\|_{F} < \eta \quad \text{and} \quad \|\hat{A}^{k+1} - \hat{A}^{k}\|_{F} < \eta.
\end{equation}

We next establish convergence theory for Algorithm \ref{altmin}. First, we define our notion of optimality. Let $\mathcal{X}$ be a Hilbert space, and $\mathcal{C} \subseteq \mathcal{X}$. Let $f:\mathcal{X} \to \mathbb{R}$ be a continuously differentiable function. Recall that a point $\hat{x} \in \mathcal{X}$ is said to be a \emph{critical point} of the constrained optimization problem $\min_{x \in \mathcal{C}}\, f(x)$ if $\langle \nabla f(x^{*}), y - x^{*}\rangle \geq 0$, for every $y \in \mathcal{C}$. A \emph{limit point} of a sequence $x^{n} \subseteq X$ is a point $x \in \mathcal{X}$ such that $x^{n}$ has a subsequence that converges to $x$.

The optimization problem \eqref{red_problem}, which defines our estimator, can be reformulated as a constrained problem with continuously differentiable objective by introducing a scalar variable $b$ and writing
\begin{align} \label{constrained_problem}
\min_{A, \Delta, b}  & \quad \left\|(I - P_{A})(X - \Delta) \right\|_{F}^{2} + b\\
\text{s.t.} & \quad \left\|\llbracket \Delta \rrbracket \right\|_{\sharp, \lambda} \leq b. \nonumber
\end{align}
Any optimal solution to this problem will clearly have $b$ attain its lower bound of $\|\Delta\|_{\sharp, \lambda}$, which establishes the equivalence between problems \eqref{red_problem} and \eqref{constrained_problem}. This constrained problem makes stating our next theorem, which gives the convergence properties of Algorithm \ref{altmin}, more convenient. Its proof is deferred to the appendix.

\begin{theoremrep} \label{lim_pts}
The sequence of iterates produced by Algorithm \ref{altmin} has at least one limit point. Each limit point is a critical point of \eqref{constrained_problem}.
\end{theoremrep}
\begin{proof}
The proof will proceed in the following steps. First, we note that
\begin{equation} \label{initialization_bound}
(P_{A}^{k}, \Delta^{k})_{k=1}^{\infty} \subseteq \left\{(P_{A}, \Delta):\left\|(I - P_{A})(X - \Delta) \right\|_{F}^{2} +  \|\llbracket \Delta \rrbracket \|_{\sharp, \lambda} \leq \left\|(I - P_{A^{0}})(X - \Delta^{0}) \right\|_{F}^{2} + \| \llbracket \Delta^{0} \rrbracket \|_{\sharp, \lambda} \right\}.
\end{equation}
for any choice of initialization $(P_{A}^{0}, \Delta^{0})$. Inequality \eqref{initialization_bound} follows from the fact that the objective function in \eqref{constrained_problem} is nonincreasing after each update of $\Delta$ and $P_{A}$. Denote by $\mathrm{Gr}(d, m) \subseteq \mathbb{C}^{m \times m}$ the set of $m \times m$ orthogonal projection matrices which project onto a subspace of dimension of $d$ (this is the set of Hermitian, idempotent matrices $M$ for which $\dim(\col(M))=d$). We show that for any $\alpha \in \mathbb{R}$,
\begin{equation}
\left\{(P_{A}, \Delta): \left\|(I - P_{A})(X - \Delta) \right\|_{F}^{2} +  \|\llbracket \Delta \rrbracket \|_{\sharp, \lambda} \leq \alpha,  \quad  P_{A}  \in \mathrm{Gr}(d,m)\right\}
\end{equation}
is compact. By the Heine Borel theorem, we can conclude that $(P_{A}^{k}, \Delta^{k})_{k=1}^{\infty}$ has a limit point. Finally, to show that each limit point is a critical point, we reformulate the alternating minimization subproblem in $P_{A}$ over a convex constraint set, so that \cite[Corollary 2]{grippo2000convergence} gives that any limit point of alternating minimization is a critical point.

Fix $\alpha \in \mathbb{R}$. We want to show that 
\begin{equation}\label{to_show_compact}
\left\{(P_{A}, \Delta): \left\|(I - P_{A})(X - \Delta) \right\|_{F}^{2} +  \| \llbracket \Delta \rrbracket \|_{\sharp, \lambda} \leq \alpha,  \quad  P_{A}  \in \mathrm{Gr}(d,m)\right\}
\end{equation}
is compact. Since the empty set is compact, it suffices to only consider $\alpha$ such that the set in \eqref{to_show_compact} is nonempty. Since this set is finite-dimensional, it suffices to show that it is closed and bounded. The set in \eqref{to_show_compact} is compact because it can be written as the intersection of sets,
\begin{equation}\label{set_intersection}
\left\{(P_{A}, \Delta): \left\|(I - P_{A})(X - \Delta) \right\|_{F}^{2} +  \| \llbracket \Delta \rrbracket \|_{\sharp, \lambda} \leq \alpha \right\} \bigcap \left\{(P_{A}, \Delta): P_{A}  \in \mathrm{Gr}(d,m)\right\}.
\end{equation}
We will show that the leftmost of these sets is the intersection are compact and that the right set is closed, so that their intersection is compact. The left set is closed because it is the lower level set of a continuous (and hence lower semicontinuous) function. Assume, for contradiction, that it is not bounded. Then there exists a sequence $(P_{A}^{k}, \Delta^{k})_{k=1}^{\infty}$ such that $\|P_{A}^{k}\|_{F} + \|\Delta^k\|_{F} \to \infty$ with 
\begin{equation}
\left\|(I - P_{A}^{k})(X - \Delta^{k}) \right\|_{F}^{2} +  \|\llbracket \Delta^{k} \rrbracket \|_{\sharp, \lambda} \leq \alpha \quad \quad \forall k. \label{the_alpha_bound}
\end{equation}
If $\|\Delta^{k}\|_{F} \to \infty$, then the fact there exists one entry of the $\lambda$ vector which is greater than $0$ gives
\begin{equation}
\left\|(I - P_{A}^{k})(X - \Delta^{k}) \right\|_{F}^{2} +  \|\llbracket \Delta^{k} \rrbracket \|_{\sharp, \lambda} \to \infty,
\end{equation}
which contradicts \ref{the_alpha_bound} and shows that $\|\Delta^{k}\|_{F}$ must be bounded. On the other hand, the value of $\|P_{A}^{k}\|_{F}$ is fixed at $\sqrt{d}$ for all $k$. Therefore $\|P_{A}^{k}\| + \|\Delta^k\| \not\to \infty$. Since no unbounded sequence exists we conclude that the left set in the intersection \eqref{set_intersection} is bounded. Having already shown that it is closed, we conclude that it is compact. The right set from the intersection \eqref{set_intersection} is closed because it can be written as the intersection of the preimage of $0$ for the three continuous maps: $M \to M^{2} - M$, $M \to M^{*} - M$ and $M \to \Tr(M) - d$.

The final point to show is that \eqref{constrained_problem} can be formulated as a continuously differentiable objective over a cartesian product of closed, nonempty, convex constraint sets. This allows us to apply \cite[Corollary 2]{grippo2000convergence} to conclude that any limit point of two-block alternating minimization is a critical point. By Lemma \ref{convex_rewrite}, variable $(\hat{\Delta}, \hat{b}, \hat{P_{A}})$ minimizes \eqref{constrained_problem} (with decision variable $P_{A}$ instead of $A$) if and only if $(\hat{\Delta}, \hat{b}, \hat{P_{A}})$ minimizes
\begin{align}
\min_{P_{A}, \Delta, b}  & \quad \left\|(I - P_{A})(X - \Delta) \right\|_{F}^{2} + b\\
\text{s.t.} & \quad \left\|\llbracket \Delta \rrbracket \right\|_{\sharp, \lambda} \leq b \nonumber\\
& \hspace{-1.75cm}P_{A} = P_{A}^{*},\; 0 \preceq P_{A} \preceq \mathrm{I}_{m}, \; \Tr(P_{A}) = d, \nonumber
\end{align}
where $\preceq$ denotes the Loewner ordering. The set
\begin{equation}
\left\{P_{A} \in \mathbb{C}^{m \times m}: P_{A} = P_{A}^{*},\; 0 \preceq P_{A} \preceq \mathrm{I}_{m}, \; \Tr(P_{A}) = d \right\}
\end{equation}
is convex, nonempty, and closed. Because the SLOPE norm is continuous and convex,
\begin{equation}
\left\{(\Delta, b) \in \mathbb{C}^{m \times n} \times \mathbb{C}^{n}: \left\|\llbracket \Delta \rrbracket \right\|_{\sharp,\lambda} \leq b \right\}
\end{equation}
is convex, nonempty, and closed. Therefore \cite[Corollary 2]{grippo2000convergence} applies and we conclude that any limit point of Algorithm \ref{altmin} is a critical point.
\end{proof}

\begin{toappendix}
\begin{lemma}\label{convex_rewrite}
The following problems have the same minimizers

\begin{minipage}{.5\textwidth}
\begin{align}
\min_{P_{A}, \Delta, b}  & \quad \left\|(I - P_{A})(X - \Delta) \right\|_{F}^{2} + b \nonumber\\
\text{s.t.} & \quad \left\|\llbracket \Delta \rrbracket \right\|_{\sharp, \lambda} \leq b \nonumber\\
& \; \quad P_{A} \in \mathrm{Gr}(d,m). \nonumber
\end{align}
\end{minipage}%
\begin{minipage}{.5\textwidth}
\begin{align}
\min_{P_{A}, \Delta, b}  & \quad \left\|(I - P_{A})(X - \Delta) \right\|_{F}^{2} + b \nonumber\\
\text{s.t.} & \quad \left\|\llbracket \Delta \rrbracket \right\|_{\sharp, \lambda} \leq b \nonumber\\
& \hspace{-1.75cm} 0 \preceq P_{A} \preceq I, \; \Tr(P_{A}) = d, \; P_{A} = P_{A}^{*}. \nonumber
\end{align}
\end{minipage}
\end{lemma}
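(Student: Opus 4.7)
The plan is to prove the lemma by showing that the minimizer sets of the two problems coincide, via a two-way inclusion exploiting the fact that the right-hand feasible set is the convex hull of $\mathrm{Gr}(d,m)$.

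First, I would verify that any element of $\mathrm{Gr}(d,m)$ satisfies the relaxed constraints: a Hermitian idempotent of rank $d$ has eigenvalues in $\{0,1\}$ with exactly $d$ equal to one, so $P_{A}=P_{A}^{*}$, $0\preceq P_{A}\preceq I$, and $\Tr(P_{A})=d$ all hold. Because the $\Delta$ and $b$ variables and the constraint $\|\llbracket\Delta\rrbracket\|_{\sharp,\lambda}\leq b$ are identical in both problems and the objectives have the same form, this gives containment of the left-hand feasible set in the right-hand feasible set; in particular, the right-hand optimal value is no larger than the left-hand optimal value and any left-hand minimizer is feasible for the right-hand problem.

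Next, I would argue that every minimizer of the right-hand problem has $P_{A}\in\mathrm{Gr}(d,m)$. Fix a candidate minimizer $(P_{A}^{\star},\Delta^{\star},b^{\star})$, set $M=(X-\Delta^{\star})(X-\Delta^{\star})^{*}\succeq 0$, and hold $(\Delta^{\star},b^{\star})$ fixed; the partial problem in $P_{A}$ becomes minimizing
\[
\Tr(M)-2\Tr(P_{A}M)+\Tr(P_{A}MP_{A})
\]
over the relaxed feasible set. I would invoke the classical matrix-theoretic fact that $\mathrm{Gr}(d,m)$ is precisely the extreme-point set of $\{P_{A}:P_{A}=P_{A}^{*},\ 0\preceq P_{A}\preceq I,\ \Tr(P_{A})=d\}$. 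Writing $P_{A}=U\,\mathrm{diag}(q)\,U^{*}$ in a common eigenbasis with $M$ (which one can argue is without loss of generality at an optimum because the off-diagonal entries only enter the quadratic term and are penalized), the sub-problem reduces to a scalar program in $q\in[0,1]^{m}$ with $\sum_{i}q_{i}=d$, which I would analyze via KKT to show that a minimizer has $q_{i}\in\{0,1\}$, with the ones placed at the indices of the top $d$ eigenvalues of $M$. This $P_{A}$ is exactly the projector produced by Theorem~\ref{thm:tsvd}, and its objective value equals the left-hand minimum, closing the inclusion in the other direction.

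The main obstacle, which I expect to require most of the technical effort, is the extreme-point reduction in the $P_{A}$ sub-problem. Because $\Tr(P_{A}MP_{A})$ is convex rather than linear in $P_{A}$, the one-line argument that ``a linear function on a compact convex set is minimized at an extreme point'' does not apply, and for some $M$ of rank exceeding $d$ the pointwise $P_{A}$ sub-problem can admit strictly interior minimizers. To handle this I would couple the $P_{A}$-sub-problem's KKT conditions with the joint optimality conditions in $\Delta$ from Theorem~\ref{thm:prox}, which constrain the columns of $\Delta^{\star}$ to lie in $\col(I-P_{A}^{\star})$ with a prescribed magnitude. Substituting this structure back shapes $M$ at a joint right-hand minimizer in the way needed to force idempotency of $P_{A}^{\star}$, and this coupled joint-optimality analysis is where I expect the heart of the proof to live.
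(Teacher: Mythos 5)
Your first inclusion is fine and matches the paper's. The problem is in the second half: you correctly expand the objective as $\Tr(M)-2\Tr(P_{A}M)+\Tr(P_{A}MP_{A})$ without assuming idempotency, and you correctly observe that the quadratic term defeats the extreme-point argument, but the ``coupled joint-optimality'' step you defer to is not carried out, and it cannot be completed as described. The obstruction you identified is decisive, not merely technical: at any candidate right-hand minimizer, fix $\Delta^{\star}$ and set $M=(X-\Delta^{\star})(X-\Delta^{\star})^{*}$ with eigenvalues $\mu_{1}\geq\dots\geq\mu_{m}$. Diagonalizing $P_{A}$ in the eigenbasis of $M$ reduces the $P_{A}$-subproblem with the verbatim objective to minimizing $\sum_{i}(1-q_{i})^{2}\mu_{i}$ over $q\in[0,1]^{m}$ with $\sum_{i}q_{i}=d$, and perturbing the $0/1$ point by moving mass $t$ from $q_{d}$ to $q_{d+1}$ changes the objective by $t^{2}\mu_{d}+(1-t)^{2}\mu_{d+1}-\mu_{d+1}$, whose derivative at $t=0$ is $-2\mu_{d+1}<0$ whenever $\mu_{d+1}>0$. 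Since $X-\Delta^{\star}$ generically has more than $d$ nonzero singular values, the relaxed problem with the quartic objective is strictly improved by a non-idempotent $P_{A}$, and no structural information about $\Delta^{\star}$ from Theorem~\ref{thm:prox} can repair this.

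The paper's proof takes a different order of operations that sidesteps the quadratic term entirely: it first uses idempotency ($P_{A}^{2}=P_{A}=P_{A}^{*}$) to rewrite $\left\|(I-P_{A})(X-\Delta)\right\|_{F}^{2}=\left\|X-\Delta\right\|_{F}^{2}-\Tr\left[P_{A}(X-\Delta)(X-\Delta)^{*}\right]$, and only then relaxes the constraint to $0\preceq P_{A}\preceq I$, $\Tr(P_{A})=d$, $P_{A}=P_{A}^{*}$. The relaxed subproblem is then \emph{linear} in $P_{A}$ for fixed $\Delta$, and the equality case of the von Neumann trace inequality places its minimizers back in $\mathrm{Gr}(d,m)$, at rank-$d$ projections onto a leading eigenspace of $(X-\Delta)(X-\Delta)^{*}$; this is also the form needed for the appeal to \cite[Corollary 2]{grippo2000convergence}. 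In short, the lemma is to be read with the linearized objective on the relaxed problem; your expansion is in fact a correct demonstration of why the literal quartic objective over the relaxed set does not give the same minimizers, so the fix is to adopt the linearize-then-relax order rather than to pursue the coupling argument.
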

\begin{proof}
Recall that the set of $m \times m$ orthogonal projection matrices which project onto a $d$ dimensional subspace have $d$ eigenvalues which are $1$ with the remaining $m-d$ eigenvalues $0$. Therefore the following sequence of problems is a relaxation of the original,
\begin{align}
        & \min_{\substack{P_{A} \in \mathbb{C}^{m \times m}\\ P_{A}^{2} =  P_{A},\, P_{A}^{*} =  P_{A}\\ \dim(\col(P_{A})) = d}}
        \quad \min_{\substack{c \in \mathbb{R} \\\Delta \in \mathbb{C}^{m \times n}\\                    \|\llbracket \Delta \rrbracket \|_{\sharp, \lambda} \leq b}} \quad \|(I - P_{A}) (\Delta - X)\|_{F}^{2} + b\\
        = & \min_{\substack{P_{A} \in \mathbb{C}^{m \times m}\\ P_{A}^{2} =  P_{A},\, P_{A}^{*} =  P_{A}\\ \dim(\col(P_{A})) = d}}
        \quad \min_{\substack{c \in \mathbb{R} \\\Delta \in \mathbb{C}^{m \times n}\\                    \|\Delta\|_{\sharp, \lambda} \leq b}} \quad \|(\Delta - X)\|_{F}^{2} -  \Tr\left[P_{A} (\Delta -         X)(\Delta - X)^{*}\right] + b \label{b4_relax}\\
        \geq &
        \min_{\substack{P_{A} \in \mathbb{C}^{m \times m}\\ 0 \preceq P_{A} \preceq  \mathrm{I}_{m} \\
                        \Tr\left(P_{A}\right) = d}}
        \quad \min_{\substack{c \in \mathbb{R} \\\Delta \in \mathbb{C}^{m \times n}\\                    \|\Delta\|_{\sharp, \lambda} \leq b}} \quad \|(\Delta - X)\|_{F}^{2} -  \Tr\left[P_{A} (\Delta -         X)(\Delta - X)^{*}\right] + b. \label{relax}
\end{align}
The equality in \ref{b4_relax} is directly from the definition of the Frobenius inner product, and the inequality in \ref{relax} follows because we have induced a relaxation on the set of projection matrices. Additionally, we both problems \eqref{b4_relax} and \eqref{relax} attain the same lower bound given by the equality condition of the von Neumann trace inequality (see e.g. \cite{mirsky1975trace}). This allows us to conclude that the inequality in \eqref{relax} is actually an equality, with both problems having the same minimizers.
\end{proof}
\end{toappendix}


We can directly interpret the critical point condition of the constrained problem from \eqref{constrained_problem} in the context of the unconstrained problem from \eqref{red_problem}.  Denote the objective function in \eqref{constrained_problem} by $f(A, \Delta) + b$, and denote by $\nabla_{A}$ and $\nabla_{\Delta}$ the gradients with respect to $A$ and $\Delta$, respectively. Then the critical point condition is 
\begin{equation} \label{cp_cond}
\langle \nabla_{A} f(\hat{A}, \hat{\Delta}), A - \hat{A} \rangle + \langle \nabla_{\Delta} f(\hat{A}, \hat{\Delta}),  \Delta - \hat{\Delta} \rangle + b - \hat{b} \geq 0,
\end{equation}
for all $(b, \Delta)$ with $b \geq \left\|\llbracket \Delta \rrbracket\right\|_{\sharp, \lambda}$. A necessary condition for optimality is $\hat{b} = \|\llbracket \hat{\Delta} \rrbracket \|_{\sharp, \lambda}$, so \eqref{cp_cond} implies
\begin{align} & \langle \nabla_{A} f(\hat{A}, \hat{\Delta}), A - \hat{A} \rangle + \langle \nabla_{\Delta} f(\hat{A}, \hat{\Delta}),  \Delta - \hat{\Delta} \rangle \label{Tseries}\\
& \quad\quad + \|\llbracket \Delta \rrbracket\|_{\sharp, \lambda} - \|\llbracket \hat{\Delta} \rrbracket \|_{\sharp, \lambda} \geq 0. \label{constraint}
\end{align}
Line \eqref{Tseries} is a first-order approximation of $f(A, \Delta) - f(\hat{A}, \hat{\Delta})$. 
Using a Taylor series expansion, the inequality in \eqref{Tseries}-\eqref{constraint} gives that 
\begin{align}
f(A, \Delta) - f(\hat{A}, \hat{\Delta}) + \|\llbracket \Delta \rrbracket \|_{\sharp, \lambda} - \|\llbracket \hat{\Delta} \rrbracket \|_{\sharp, \lambda} \\
\geq o(\|A - \hat{A}\|_{F} + \|\Delta - \hat{\Delta}\|_{F}).
\end{align}
So, as $\|A - \hat{A}\|_{F} + \|\Delta - \hat{\Delta}\|_{F} \to 0$,
\begin{equation}\label{no_local_descent}
\liminf f(A,\Delta) + \|\llbracket \Delta \rrbracket \|_{\sharp, \lambda} \geq f(\hat{A}, \hat{\Delta}) + \| \llbracket \hat{\Delta} \rrbracket \|_{\sharp, \lambda}.
\end{equation}
Inequality \eqref{no_local_descent} demonstrates that any critical point of \eqref{constrained_problem} has no local descent directions for problem \eqref{red_problem}, so it is a local minimizer. Therefore, by Theorem \ref{lim_pts}, each limit point produced by Algorithm \ref{altmin} is a local minimizer of \eqref{red_problem}.


\section{Statistical Performance}\label{sec:error}

In this section, we examine the statistical error of estimators derived from minimizing \eqref{red_problem}. Since our primary goal is to identify observations which feature interference, we focus on the support of $\hat{\Delta}$--the set of columns which are not estimated to be the zero vector. The problem of correctly identifying whether each observation in a sample contains nonzero interference is a sequence of hypothesis tests.  In the context of detecting interfered observations, we take the null hypothesis to be the assertion that a certain sample does not contained interference, and take as alternative hypothesis the assertion that the interference term is nonzero for that sample. Since larger penalty values encourage sparsity, selecting the entries of $\lambda$ larger results in an estimate $\hat{\Delta}$ with more sparsity, so that fewer observations are estimated to contain interference. On the other hand, choosing the $\lambda$ vector with smaller entries results in an estimate $\hat{\Delta}$ with more nonzero entries, meaning that more observations are identified as containing interference. An ideal $\lambda$ vector is one which achieves an appropriate balance between false positive errors, where observations are incorrectly identified as containing interference, and false negative errors, where observations are incorrectly identified as containing no interference.

A common way to measure error across a sequence of hypothesis tests is the \emph{false discovery rate}, abbreviated FDR \cite{efron2021computer}. The FDR is defined as the expected false positive rate--the number of incorrectly discarded observations divided by the total number of discarded observations. If $\tilde{\Delta}$ is any estimate of a true interference term $\Delta_{0}$, then the \emph{false discovery rate} of $\tilde{\Delta}$ is
\begin{equation}\label{fdr}
\mathrm{FDR} = \mathbb{E}\left[\frac{\left|\left\{i \in [n] : \llbracket \Delta_{0} \rrbracket_{i} = 0 \text{ and }  \llbracket \tilde{\Delta} \rrbracket_{i} \neq 0\right\}\right|}{\max\left\{ \left|\left\{i \in [n]: \llbracket \tilde{\Delta} \rrbracket_{i} \neq 0\right\}\right|, 1\right\}}\right].
\end{equation}

Our next result gives a prescription for the $\lambda$ vector in \eqref{red_problem} based on a desired false discovery rate. This yields an estimator which has a false discovery rate that is bounded above by the desired quantity. 
\begin{theoremrep}\label{fdr_theorem}
Consider $n$ observations from the statistical model \eqref{our_model}, of which $n_{0}$ have $\delta(t) = 0$. Let $\hat{A}$ be a nonrandom estimate of $A_{0}$ and $q \in (0,1)$ the desired false discovery rate. Take the parameter vector $\lambda$ in the group SLOPE penalty such that
\begin{equation}\label{lambda_recipe}
\lambda_{(n - r + 1)} = \max_{i \in [n]} \Big\{ F^{-1}_{\|((I - P_{\hat{A}})(A_{0} s(t_{i}) + \epsilon(t_{i}))\|}\Big(1-\frac{q \, r}{n}\Big)\Big\},
\end{equation}
where $F^{-1}_{\|((I - P_{\hat{A}})(A_{0} s(t_{i}) + \epsilon(t_{i}))\|}$ denotes the quantile function of the random variable $\|((I - P_{\hat{A}})(A_{0} s(t_{i}) + \epsilon(t_{i}))\|$. Then the false discard rate for testing $\Delta \neq 0$ is bounded above via 
\begin{equation} \label{fdr_bound}
\mathrm{FDR} \leq q \frac{n_{0}}{n}.
\end{equation}
\end{theoremrep}
\begin{proof}
From Theorem \ref{thm:prox}, the FDR in \eqref{fdr} can be rewritten in terms of the auxiliary variable 
$c$ as 
\begin{equation}\label{fdr_in_c}
\mathrm{FDR} = \mathbb{E}\left[\frac{\left|\left\{i \in [n]: c_{i} = 0, \hat{c}_{i} \neq 0\right\}\right|}{\max\left\{ \left|\left\{i \in [n]: \hat{c}_{i} \neq 0\right\}\right|, 1\right\}}\right],
\end{equation}
where $c_{i} = \llbracket \Delta_{0} \rrbracket_{i}$. We first write \eqref{fdr_in_c} as a sum of probabilities as follows, where we denote by $\mathcal{I}$ the set $\{i \in [n]: c_{i} \neq 0\}$,
\begin{align}
&\quad \quad \mathbb{E}\left[\frac{\left|\left\{i \in [n]: c_{i} = 0, \hat{c}_{i} \neq 0\right\}\right|}{\max\left\{ \left|\left\{i \in [n]: \hat{c}_{i} \neq 0\right\}\right|, 1\right\}}\right]\\
&=\sum_{r=1}^{n} \mathbb{E}\left[ \frac{\left|\left\{i: c_{i} = 0, \hat{c}_{i} \neq                      0\right\}\right|}{r}\, \mathbbm{1}_{\max\left\{ \left|\left\{i \in [n]: \hat{c}_{i} \neq 0\right\}\right|, 1\right\} = r}\right]\\
&=\sum_{r=1}^{n} \frac{1}{r} \mathbb{E}\left[ \sum_{i \in [n]\setminus \mathcal{I}}                      \mathbbm{1}_{\left\{\hat{c}_{i} \neq 0\right\}} \, \mathbbm{1}_{\left\{\max\left\{ \left|\left\{j \in [n]: \hat{c}_{j} \neq 0\right\}\right|, 1\right\} = r\right\}}\right] \label{expand_in_i}\\
&=\sum_{r=1}^{n} \frac{1}{r}  \sum_{i \in [n]\setminus \mathcal{I}}                                      \mathbb{E}\left[\mathbbm{1}_{\left\{\hat{c}_{i} \neq 0\right\}} \, \mathbbm{1}_{\left\{\max\left\{       \left|\left\{j \in [n]: \hat{c}_{j} \neq 0\right\}\right|, 1\right\} = r\right\}}\right]\\
&=\sum_{r=1}^{n} \frac{1}{r}  \sum_{i \in [n]\setminus \mathcal{I}} \mathbb{P}\Big(\hat{c}_{i} \neq 0   \text{ and } \max\left\{ \left|\left\{j \in [n]: \hat{c}_{j} \neq 0\right\}\right|, 1\right\} = r\Big). \label{from_expect_to_prob}
\end{align}
Note that in \eqref{expand_in_i} and what follows, the summation is over all $i$ for which $c_{i} =0$. For fixed $i$, we examine the expression
\begin{equation} \label{fixed_i}
\mathbb{P}\Big(\hat{c}_{i} \neq 0   \text{ and } \max\left\{ \left|\left\{j \in [n]: \hat{c}_{j} \neq 0\right\}\right|, 1\right\} = r\Big).
\end{equation}
This is probability of the event where the $i$th sample is estimated to contain interference and $r$ observations in total are estimated to contain interference. By Lemma B.1 in \cite{bogdan2015slope},\begin{align}
& \mathbb{P}\Big(\hat{c}_{i} \neq 0   \text{ and } \max\left\{ \left|\left\{j \in [n]: \hat{c}_{j} \neq 0\right\}\right|, 1\right\} = r\Big)\\
\leq \; & \mathbb{P}\Big(\llbracket(I - P_{\hat{A}}) X \rrbracket_{i} >  \lambda_{(n-r+1)} \text{ and } \max\left\{ \left|\left\{j \in [n]: \hat{c}_{j} \neq 0\right\}\right|, 1\right\} = r \Big).
\label{from_r_to_rm1}
\end{align}
From Lemma B.2 in \cite{bogdan2015slope}, the expression in \eqref{from_r_to_rm1} is bounded above by
\begin{equation}
\leq \; \mathbb{P}\Big(\llbracket(I - P_{\hat{A}}) X \rrbracket_{i} >  \lambda_{(n-r+1)} \text{ and } \max\left\{ \left|\left\{j \in [n]: \hat{c}_{j} \neq 0 \text{ and } j \neq i \right\}\right|, 1\right\} = r-1 \Big). \label{after_rm1}
\end{equation}
Note that in \eqref{fixed_i} we have restricted ourselves to $i$ for which $c_{i} = 0$, so that $X_{i}$ contains no interference. In this setting, $X_{i} = A_{0}\,s(t_{i}) + \epsilon(t_{i})$, so \eqref{after_rm1} becomes
\begin{equation}
\end{equation}

Since $\hat{A}$ is fixed, and $\epsilon$ is independent across observations, 
\begin{align}
& \mathbb{P}\Big(\|(I - P_{\hat{A}})(A_{0} s(t_{i}) + \epsilon(t_{i})) \|_{2} >  \lambda_{(n-r+1)} \text{ and } \max\left\{ \left|\left\{j \in [n]: \hat{c}_{j} \neq 0 \text{ and } j \neq i \right\}\right|, 1\right\} = r-1 \Big) \\
= \; & \mathbb{P}\Big(\|(I - P_{\hat{A}})(A_{0} s(t_{i}) + \epsilon(t_{i})) \|_{2} >  \lambda_{(n-r+1)}\big) \, \mathbb{P}\Big(\max\left\{ \left|\left\{j \in [n]: \hat{c}_{j} \neq 0 \text{ and } j \neq i \right\}\right|, 1\right\} = r-1 \Big) \\
\leq \; & \frac{q r}{n} \, \mathbb{P}\Big(\max\left\{ \left|\left\{j \in [n]: \hat{c}_{j} \neq 0 \text{ and } j \neq i \right\}\right|, 1\right\} = r-1 \Big)
\end{align}
Plugging this bound into \eqref{from_expect_to_prob}, we have
\begin{align}
&\quad \quad \mathbb{E}\left[\frac{\left|\left\{i \in [n]: c_{i} = 0, \hat{c}_{i} \neq 0\right\}\right|}{\max\left\{ \left|\left\{i \in [n]: \hat{c}_{i} \neq 0\right\}\right|, 1\right\}}\right]\\
\leq\; & \sum_{r=1}^{n} \frac{1}{r} \sum_{i \in [n] \setminus \mathcal{I}} \frac{q r}{n} \, \mathbb{P}\Big(\max\left\{ \left|\left\{j \in [n]: \hat{c}_{j} \neq 0 \text{ and } j \neq i \right\}\right|, 1\right\} = r-1 \Big)\\
=\; & \frac{q}{n} \sum_{i \in [n] \setminus \mathcal{I}} \sum_{r=1}^{n} \mathbb{P}\Big(\max\left\{ \left|\left\{j \in [n]: \hat{c}_{j} \neq 0 \text{ and } j \neq i \right\}\right|, 1\right\} = r-1 \Big)\\
=\; & \frac{q \, n_{0}}{n}.
\end{align}
\end{proof}

Theorem \ref{fdr_theorem} gives that, for a fixed estimate of $A_{0}$, the $\lambda$ vector should be chosen based on the desired FDR bound and the root mean square of the signal within the estimated noise subspace \emph{assuming no interference}. In particular, the $\lambda$ vector should be chosen via quantiles of the norm of the signal contained within the estimated noise subspace to obtain the desired FDR guarantee. In the case that $\epsilon(t_{i})$ is a mean zero and isotropic complex Gaussian random variable, these are the quantiles of a scaled noncentral $\chi$ distribution. Considering the optimistic setting where $\hat{A}$ perfectly estimates $A_{0}$, we have $(I - P_{\hat{A}})\, A_{0} s(t_{i})  = 0$ so the $A_{0} s(t_{i})$ term in \eqref{lambda_recipe} vanishes and the quantiles can be chosen from a scaled $\chi$ distribution with degree of freedom equal to $2(m - d)$. The scaling of the $\chi$ distribution is then only a function of the scalar-valued scaling of the identity matrix in the covariance of $\epsilon(t_{i})$. We summarize this important special case in the following corollary.

\begin{corollaryrep}\label{chi_lambdas}
In the context of Theorem \ref{fdr_theorem}, assume $\hat{A} = A_{0}$ and $\epsilon(t_{i}) \sim CN(0, \sigma^{2} \mathrm{I}_{m})$ for all $i \in [n]$. Take the parameter vector $\lambda$ in the group slope penalty such that
\begin{equation} \label{cor_lambda}
\lambda_{(n-r+1)} = \frac{\sigma \sqrt{2}}{2} \,F^{-1}_{\chi_{k}} \left(1 - \frac{q \, r}{n}\right),
\end{equation}
where $F^{-1}_{\chi_{k}}$ denotes the quantile function of a $\chi$ distributed random variable with $k=2(m-d)$ degrees of freedom. Then the bound on the false discard rate in \eqref{fdr_bound} holds.
\end{corollaryrep}
\begin{proof}
To ease notation, denote by $\epsilon$ be a $CN(0, \sigma^{2} I_{m})$ random vector. In the case that $\hat{A} = A_{0}$,
\begin{equation}
\left\|(I - P_{\hat{A}})(A_{0} s(t_{i}) + \epsilon(t_{i}))\right\|_{2} = \left\|(I - P_{\hat{A}}) \epsilon \right\|_{2}.
\end{equation}
We show that $\frac{2}{\sigma^{2}} \left\|(I - P_{\hat{A}}) \epsilon \right\|^{2}$ is $\chi^{2}_{2(m-d)}$ distributed, from which the result follows by applying Theorem \ref{fdr_theorem}. Let $z \sim N(0, \mathrm{I}_{m-d})$ be a standard complex normal random vector of dimension $m-d$, and let $U \begin{pmatrix} \mathrm{I}_{m-d} & 0\\ 0 & 0 \end{pmatrix} U^{*}$ be an eigendecomposition of $I - P_{\hat{A}}$ such that the last $d$ orthonormal columns $u_{m-d}, \dots, u_{m}$ span $\mathrm{col}(A)$. Since the matrix $U$ is orthogonal, $U^{*} z$ is $CN(0,\mathrm{I}_{m})$, with the result that each $u_{i}^{*} z \sim CN(0, 1)$ and $\{\Re(u_{1}^{*} z), \Im(u_{1}^{*} z), \dots, \Re(u_{m}^{*} z), \Im(u_{m}^{*} z)\}$ is a collection of independent $N(0, \frac{\sqrt{2}}{2})$ standard normal random variables.
We then have
\begin{align}
\left\|(I - P_{\hat{A}}) \epsilon\right\|^{2}_{2} &= \epsilon^{*} U \begin{pmatrix} \mathrm{I}_{m-d} & 0\\ 0 & 0 \end{pmatrix} U^{*} \epsilon\\
&= \sigma^{2} \sum_{i=1}^{m-d} \|u_{i}^{*} z\|_{2}^2 \\
&= \sigma^{2} \sum_{i=1}^{m-d} \Re(u_{i}^{*} z)^{2} + \Im(u_{i}^{*} z)^{2}\\
&= \frac{\sigma^{2}}{2} \sum_{i=1}^{m-d} \left(\Re\left\{u_{i}^{*} z\right\}\sqrt{2} \right)^{2} + \left(\Im\left\{u_{i}^{*} z\right\}\sqrt{2} \right)^{2}
\end{align}

Therefore $\frac{2}{\sigma^{2}} \left\|(I - P_{\hat{A}}) \epsilon\right\|_{2}^{2}$ is the sum of $2(m-d)$ independent standard normal random variables, so it is distributed $\chi^{2}_{2(m-d)}$. Taking the square root of $\frac{2}{\sigma^{2}} \left\|(I - P_{\hat{A}}) \epsilon\right\|_{2}^{2}$ proves the result.
\end{proof}



\begin{remark}
The $\sqrt{2}/2$ scaling in \eqref{cor_lambda} accounts for the fact that the real and imaginary components of a standard complex Gaussian are independent Gaussian variables with zero mean and variance $1/2$. See \cite{goodman1963statistical} for details. If the real and imaginary components of $\epsilon$ are independent $N(0, \sigma^{2} \mathrm{I}_{m})$ random vectors, then \eqref{cor_lambda} becomes
\begin{equation}
\lambda_{(n-r+1)} = \sigma \,F^{-1}_{\chi_{k}} \left(1 - \frac{q \, r}{n}\right).
\end{equation}

\end{remark}

In the experiments that follow, we will choose the $\lambda$ vector according to Corollary \ref{chi_lambdas}. From the assumptions required for this result, we see that it may fail in cases where our estimate $\hat{A}$ is far from $A_{0}$. In those cases, the algorithm has converged to a stationary point far from $A_{0}$. As we describe in the next section, this failure case is unlikely and only occurs when a high proportion of the observations feature interference.

\section{Experiments}\label{sec:experiments}

In this section, we perform a number of experiments designed to demonstrate the statistical and computational efficiency of the estimator proposed in \eqref{red_problem} and computed with Algorithm \ref{altmin}. We choose the penalty vector $\lambda$ vector according to \eqref{cor_lambda} and the stopping condition from \eqref{stop_cond} with $\eta = 10^{-6}$. For the sake of scientific reproducibility, a program to reproduce all results in this section can be found on the first author's website.

Throughout this section our experiments will take the following form; simulate 
\begin{equation}
x(t) = A_{0} \, s(t) + \epsilon(t) + \delta(t)
\end{equation}
under the assumption of one target signal of interest and a $50$ channel array. Therefore $s(t) \in \mathbb{C}$, $x(t) \in \mathbb{C}^{50}$, and $A_{0} \in \mathbb{C}^{50 \times 1}$. We set additional sensor and signal parameters according to typical values one might encounter in underwater acoustics \cite{rossing2007springer}, though our methodology is not limited to this domain. We take the signal of interest to be a $300$ Hz sine wave with unit amplitude, and we assume that the sensor array has a sampling rate of 10 kHz. The noise term will be independent and identically distributed isotropic complex Gaussian, $\epsilon(t) \sim CN(0, \sigma_{\epsilon}^{2} \mathrm{I}_{50})$, for each $t$. Throughout the following subsections, we take $\sigma_{\epsilon} = \sqrt{2}/2$, so that the signal to noise ratio \emph{without interference} is $10 \log(2)$ dB. We will change the intensity of the interference term $\delta(t)$, its sparsity level, and its nature (by considering both directed and undirected interference) throughout our experiments.

We construct $A_{0}$ from the geometry of a uniform linear array with quarter-wavelength spacing. In order to have an easily interpretable metric for our ability to estimate a signal subspace, we will focus on estimating the direction of arrival of the signal of interest. Throughout, we take the true direction of arrival for the $300$ Hz signal of interest to be $\pi/4$ radians from the axis defining the sensor array. After using the estimator in \eqref{red_problem} to estimate the observations with nonzero interference, we produce a direction of arrival estimate by using the conventional signal subspace estimator \eqref{opt_problem1} with those observations discarded.




\subsection{Random Interference}\label{sec:randint}

In this section we choose parameters $p \in [0,1]$ and $\sigma_{\delta} > 0$, and then for each $t$ take $\delta(t) = r(t) \xi(t)$, where $r(t) \sim \mathrm{Bernoulli}(p)$ and $\xi(t) \sim CN(0, \sigma_{\delta}^{2} \mathrm{I}_{50})$. Phrased differently, interference is nonzero with probability $p$, and nonzero interference takes the form of a mean zero and isotropic complex Gaussian with variance parameter $\sigma_{\delta}^{2}$. We take $q$, the desired False Discovery Rate, to be $0.1$, and generate the penalty vector $\lambda$  according to Corollary \ref{chi_lambdas}. We assume $10$ seconds worth of data from the sensor array (corresponding to $100,000$ vector-valued observations). 

The estimator in \eqref{red_problem} performs very well in this setting for a wide range of parameters  $\sigma_{\delta}$ and $p$. For an initial example, taking $\sigma_{\delta} = \sqrt{2}$ and $p = .33$ results in an estimator which has confusion matrix given in Table \ref{tab:randint-conmat}. It is remarkable that the estimator produced no false negatives. The false positive rate for this example is $\approx 0.066$, well below our desired false discard rate of $q = 0.1$.

\begin{table}[h]
\centering
\begin{tabular}{ll|rr}
\toprule
{} & {} &  \multicolumn{2}{c}{Estimation} \\
{} & {} &  No Interference &  Interference \\
\midrule
\multirow{2}{*}{Truth} & No Interference &                    64726 &                  2326 \\
{} & Interference    &                        0 &                 32948 \\
\bottomrule
\end{tabular}
\caption{$\sigma_{\epsilon} = \sqrt{2}/2$, $\sigma_{\delta} = \sqrt{2}$, $p = .33$. In this example, the true direction of arrival was $\pi/4$ and the estimate produced after discarding observations estimated to contain interference was $0.2497 \pi$.}
\label{tab:randint-conmat}
\end{table}

For $\sigma_{\epsilon} = \sqrt{2}/2$ and $\sigma_{\delta} = \sqrt{2}$, Figure \ref{fig:fnr-fpr-in-p} gives false positive and false negative rates as a function of $p$. Figure \ref{fig:fnr-fpr-in-sigma} gives the false positive and false negative rate as a function of $p$. In both figures the false negative rate is extremely small, while the false positive rate is well below the desired rate of $q = 0.1$. The case for varying $p$ is especially interesting because one might expect that more interference (larger $p$) would lead to more errors. Instead, the false positive rate decreases linearly as the proportion of interfered observations increases, which is consistent with role of $n_{0}$ in the FDR bound \eqref{fdr_bound}. Large values of $\sigma_{\delta}$, which result in a small signal to interference ratio, have no visible effect on either the false positive or false negative rates.

\begin{figure}
\centering
\begin{subfigure}[a]{.5\textwidth}
\centering
\includegraphics[width=3in]{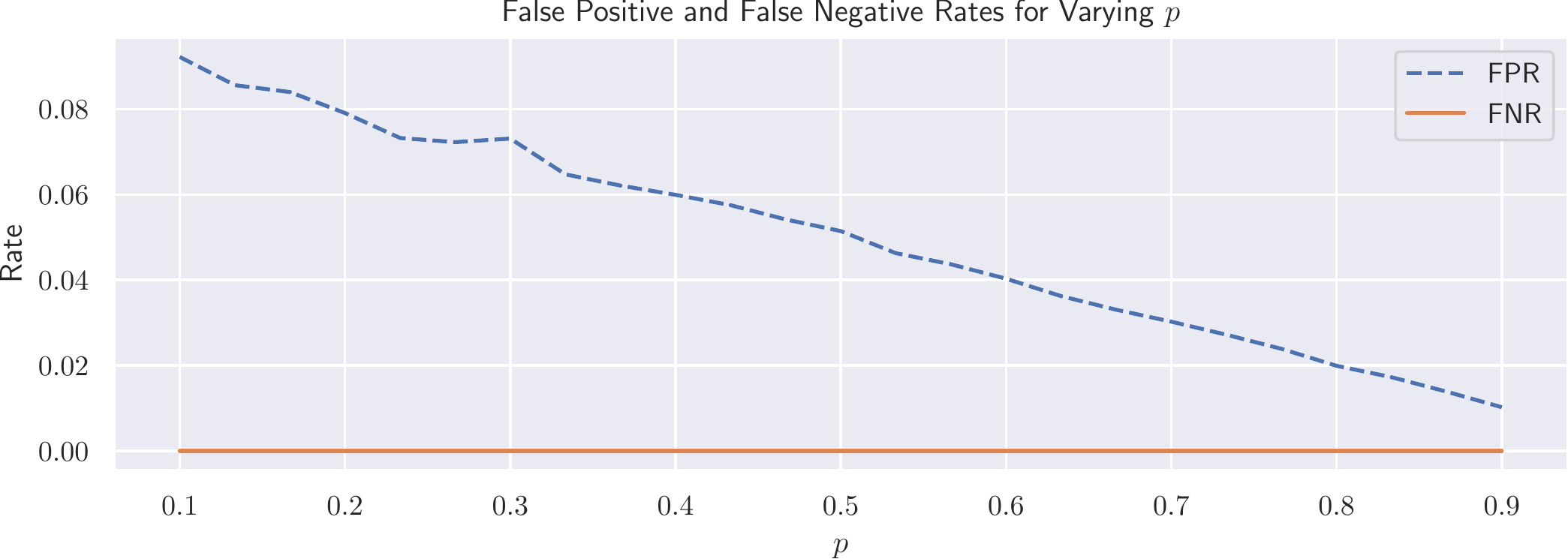}
\caption{Varying $p$ with $\sigma_{\delta} = \sqrt{2}$.}
\label{fig:fnr-fpr-in-p}
\end{subfigure}\\
\begin{subfigure}{.5\textwidth}
\centering
\includegraphics[width=3in]{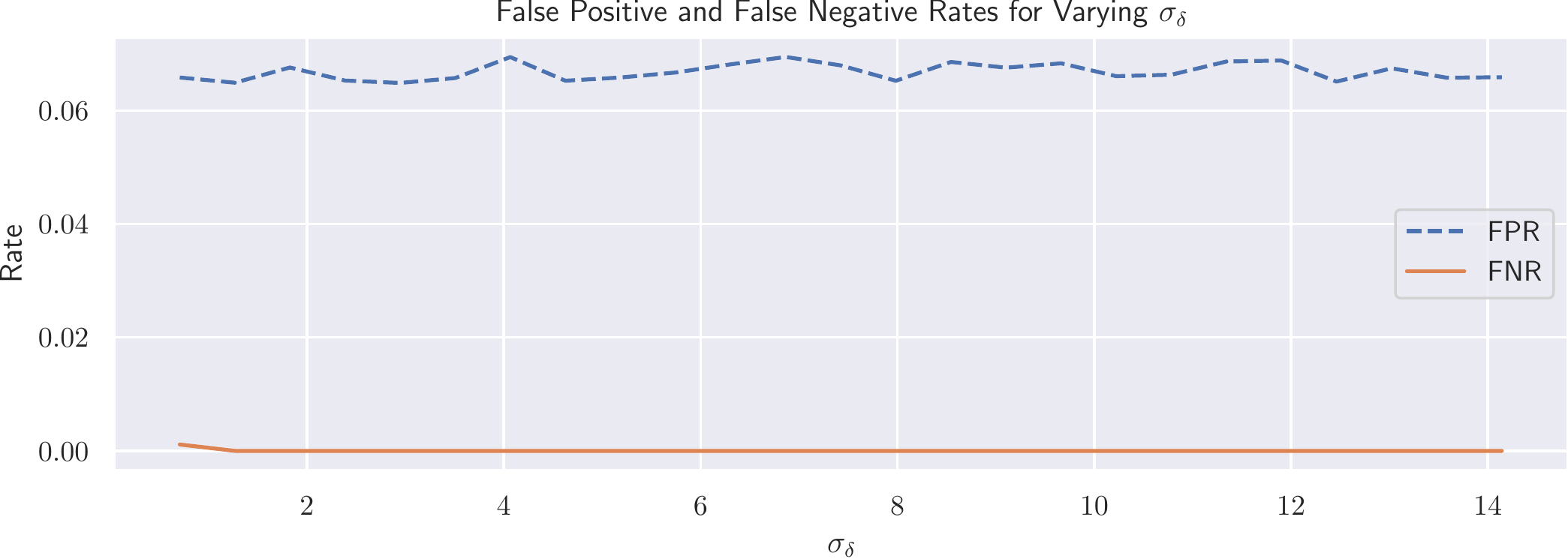}
\caption{Varying $\sigma_{\delta}$ with $p = .33$.}
\label{fig:fnr-fpr-in-sigma}
\end{subfigure}
\caption{False positive and false negative rates for the random interference experiments of section \ref{sec:randint}. (a) Fixed $\sigma_{\delta}$ and varying $p$. (b) Fixed $p$ and varying $\sigma_{\delta}$.}
\label{fig:fnr-fpr-randint}
\end{figure}

\subsection{Directed Interference with Random Amplitude} \label{sec:randamp}

In our next set of experiments, we consider the setting of interference along a fixed direction of arrival with random amplitude. We take $\delta(t) = r(t) \, |\xi(t)| \, B $, where $r(t) \sim \mathrm{Bernoulli}(p)$ and $\xi(t) \in N(0, \sigma_{\delta}^{2})$. The fixed matrix $B \in \mathbb{C}^{m \times 1}$ is the array response vector for a signal with direction of arrival $\pi/2$. Thus, this experiment captures an interference term which is nonzero with probability $p$. Nonzero interference has amplitude which takes random values according to scaled $\chi$ distribution, and fixed direction of arrival $\pi/2$. We take the noise term $\epsilon(t) \sim CN(0, \sigma_{\epsilon}^{2} \mathrm{I}_{50})$ for all $t$, with $\sigma_{\epsilon} = \sqrt{2}/2$ as in section \ref{sec:randint}.

Table \ref{tab:randamp-conmat} gives the confusion matrix for an experiment conducted with $p = 0.1$ and $\sigma_{\delta} = \sqrt{2}$. The false positive rate of $\approx 0.106$ is quite close to $0.1$, our desired bound on the false discard rate. In this experiment we also have some false negatives, which results in $\approx 4\%$ of observations estimated to not contain interference actually containing nonzero interference. Empirically, these observations are situations where the amplitude $|\xi(t)|$ of the interference term is small, so that interference is more easily confused with the noise term $\epsilon(t)$. This is clear from the fact that the impact on the direction of arrival estimate is minimal--the direction of arrival estimate when discarding the observations estimated to contain interference is $0.2506 \pi$, which is quite close to true value of $\pi/4$ and better than the $0.2597 \pi$ estimate generated from using all observations.
\begin{table}[h]
\centering
\begin{tabular}{ll|rr}
\toprule
{} & {} &  \multicolumn{2}{c}{Estimation} \\
{} & {} &  No Interference & Interference \\
\midrule
\multirow{2}{*}{Truth} &  No Interference &                    89349          &         729  \\
{} & Interference    &                    3793           &       6129  \\
\bottomrule
\end{tabular}
\caption{$\sigma_{\delta} = \sqrt{2}$, $p = 0.1$. In this example the true direction of arrival was $\pi/4$, and the estimate produced by discarding observations estimated to contain interference was $0.2506 \pi$ as compared to $0.2597 \pi$ when using all observations.}
\label{tab:randamp-conmat}
\end{table}

Figure \ref{fig:fnr-fpr-randamp} examines the false positive rate and false negative rate for a wide range of $p$ and $\sigma_{\delta}$ values. In Figure \ref{fig:fnr-fpr-in-p-randamp}, the clear trend is that increasing the proportion of observations containing interference results in an increase in both false positive and false negative rates. For $p < 0.2$, where (in expectation) less than $20\%$ of observations contain interference, the false positive rate is only slightly above our desired bound of $q=0.1$, and the false negative rate slightly below that, but $p > 0.2$ results in a rapid increase in both error rates. Unsurprisingly, detecting which observations contain interference is more difficult when interference occurs in higher proportion of observations.

For fixed $p=0.1$ and $\sigma_{\delta}$ varying, Figure \ref{fig:fnr-fpr-in-sigma-randamp} demonstrates that increasing the amplitude of the interfering signal results in a slightly increased false positive rate, while the false negative rate decreases. This is especially promising because impulse noise is often extremely sparse with heavy tails, so the fact that our estimator has only minimal loss of performance for sparse interference containing extreme values confirms its utility in those settings.

\begin{figure}
\centering
\begin{subfigure}[a]{.5\textwidth}
\centering
\includegraphics[width=3in]{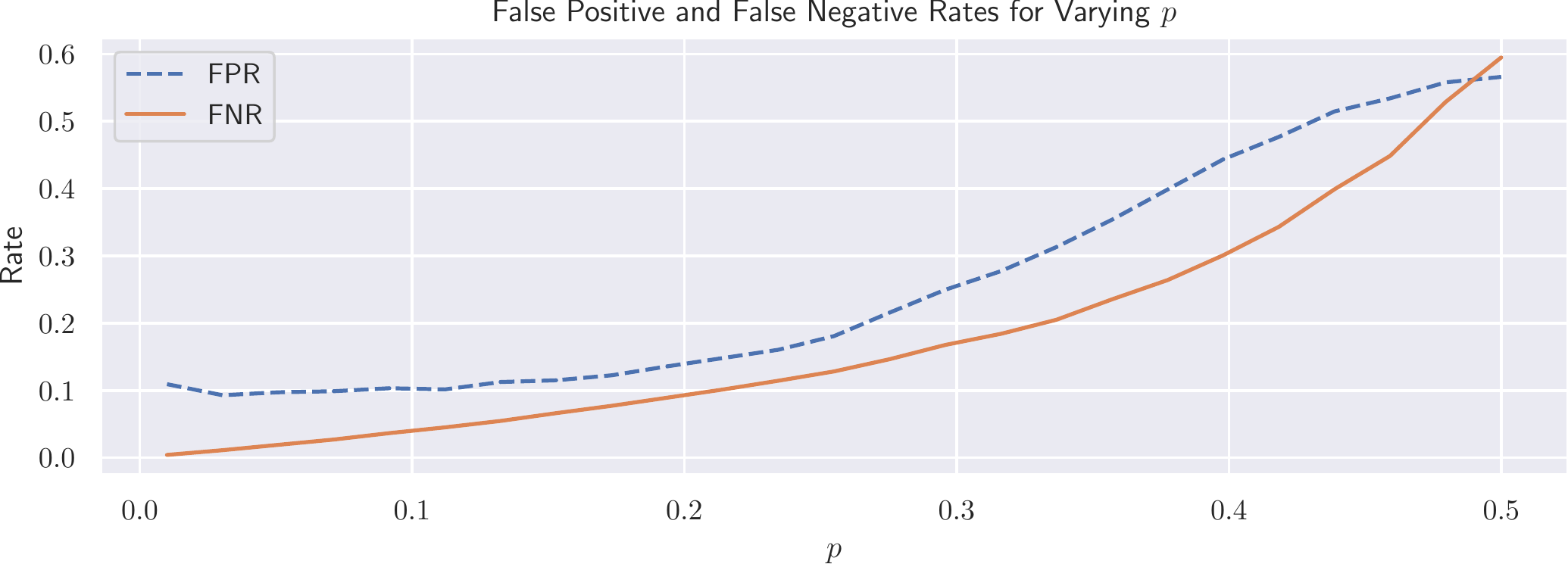}
\caption{Varying $p$ with $s_{\delta} = \sqrt{2}$.}
\label{fig:fnr-fpr-in-p-randamp}
\end{subfigure}\\
\begin{subfigure}{.5\textwidth}
\centering
\includegraphics[width=3in]{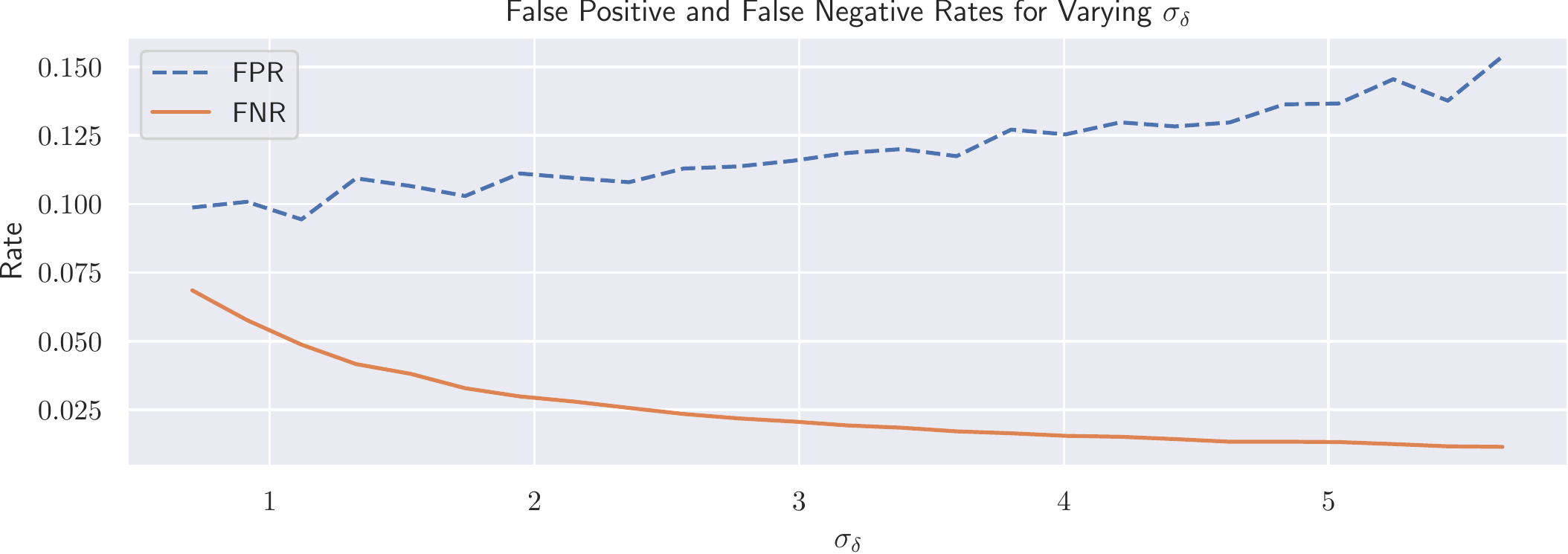}
\caption{Varying $\sigma_{\delta}$ with $p = 0.1$.}
\label{fig:fnr-fpr-in-sigma-randamp}
\end{subfigure}
\caption{False positive and false negative rates for the random amplitude experiments of section \ref{sec:randamp}. (a) Fixed $\sigma_{\delta}$ and varying $p$. (b) Fixed $p$ and varying $\sigma_{\delta}$.}
\label{fig:fnr-fpr-randamp}
\end{figure}

\subsection{Directed Interference with Constant Amplitude} \label{sec:dirintf}

For our final set of experiments, we consider directed interference with constant amplitude. That is, we take $\delta(t) = r(t) \, B \, s_{\delta}$
, where $r(t) \sim \mathrm{Bernoulli}(p)$ and $s_{\delta} \in \mathbb{R}$ is some constant. As in section \ref{sec:randamp}, the matrix $B$ is the array response vector for a signal with direction of arrival $\pi/2$ relative to the axis of the uniform linear array.

Table \ref{tab:dirintf-conmat} gives the confusion matrix for one such experiment conducted with $p = 0.1$ and $s_{\delta} = 1.0$.

\begin{table}[h]
\centering
\begin{tabular}{ll|rr}
\toprule
{} & {} &  \multicolumn{2}{c}{Estimation} \\
{} & {} &  No Interference & Interference \\
\midrule
\multirow{2}{*}{Truth} &  No Interference &                 88768       &           1310  \\
{} & Interference    &                   0      &            9922 \\
\bottomrule
\end{tabular}
\caption{$s_{\delta} = 1.0$, $p = 0.1$. In this example, the true direction of arrival was $\pi/4$, and the estimate produced by discarding observations estimated to contain interference was $0.2494\pi$, compared to $0.2622\pi$ when using all observations.}
\label{tab:dirintf-conmat}
\end{table}

Figure \ref{fig:fnr-fpr-dirintf} gives directed interference results for variable $p$ and $s_{\delta}$. In Figure \ref{fig:fnr-fpr-in-p-dirintf}, with $s_{\delta}$ fixed at $1.0$, we see that the false positive rate is near our desired bound of $q=0.1$ for $p \leq 0.1$, but that increasing the probability of nonzero interference beyond $0.1$ results in a rapid increase in the false positive rate. Unlike Figure \ref{fig:fnr-fpr-in-p-randamp}, the false positive rate increases while the false negative rate stays small. Since all errors are false positives we see a more rapid rise in false positive rate than in Figure \ref{fig:fnr-fpr-in-p-randamp}, so that at $p = 0.3$ we have $50\%$ of observations estimated to contain nonzero interference actually contain no interference.

The situation for high amplitude interference confirms the trend observed in Figures \ref{fig:fnr-fpr-in-sigma} and \ref{fig:fnr-fpr-in-sigma-randamp}. In Figure \ref{fig:fnr-fpr-in-sigma-dirintf}, we see that high powered interference results in only a small increase to the false positive rate. In particular, we see that the false positive rate only increases to $0.125$ from our desired bound of $0.1$ when the amplitude of the interference term is five times larger than the signal of interest.

\begin{figure}
\centering
\begin{subfigure}[a]{.5\textwidth}
\centering
\includegraphics[width=3in]{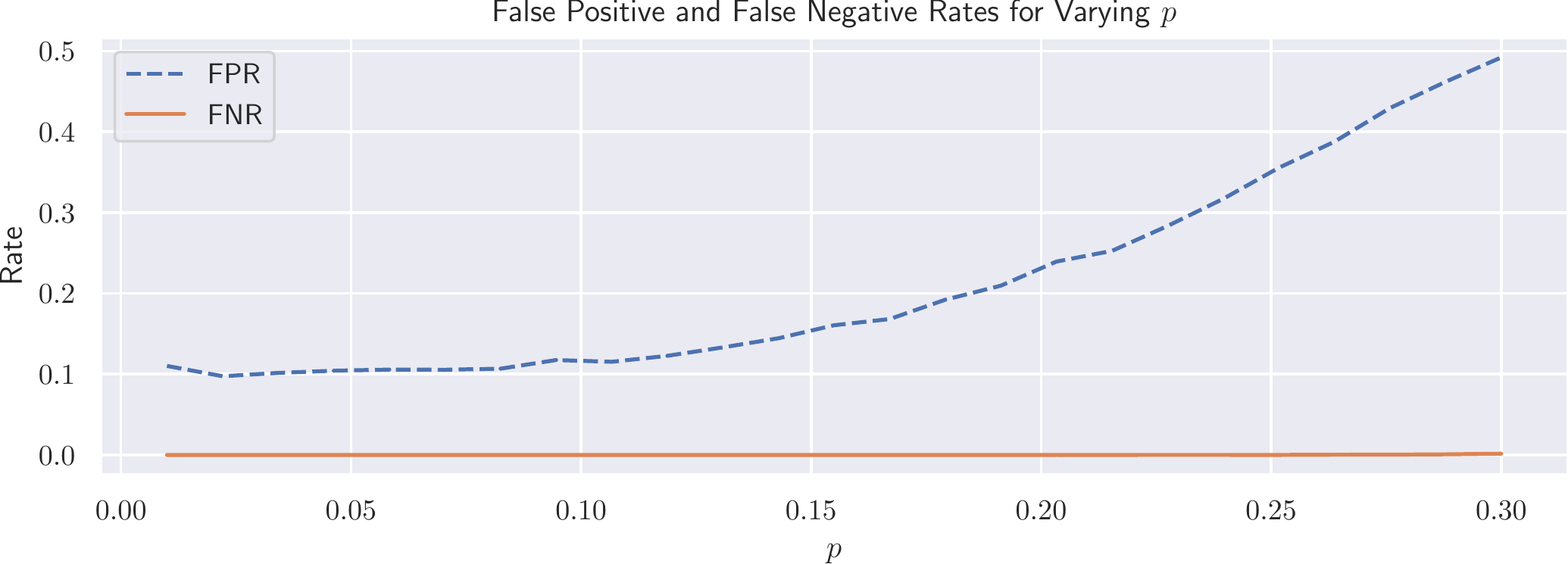}
\caption{Varying $p$ with $s_{\delta} = 1$.}
\label{fig:fnr-fpr-in-p-dirintf}
\end{subfigure}\\
\begin{subfigure}{.5\textwidth}
\centering
\includegraphics[width=3in]{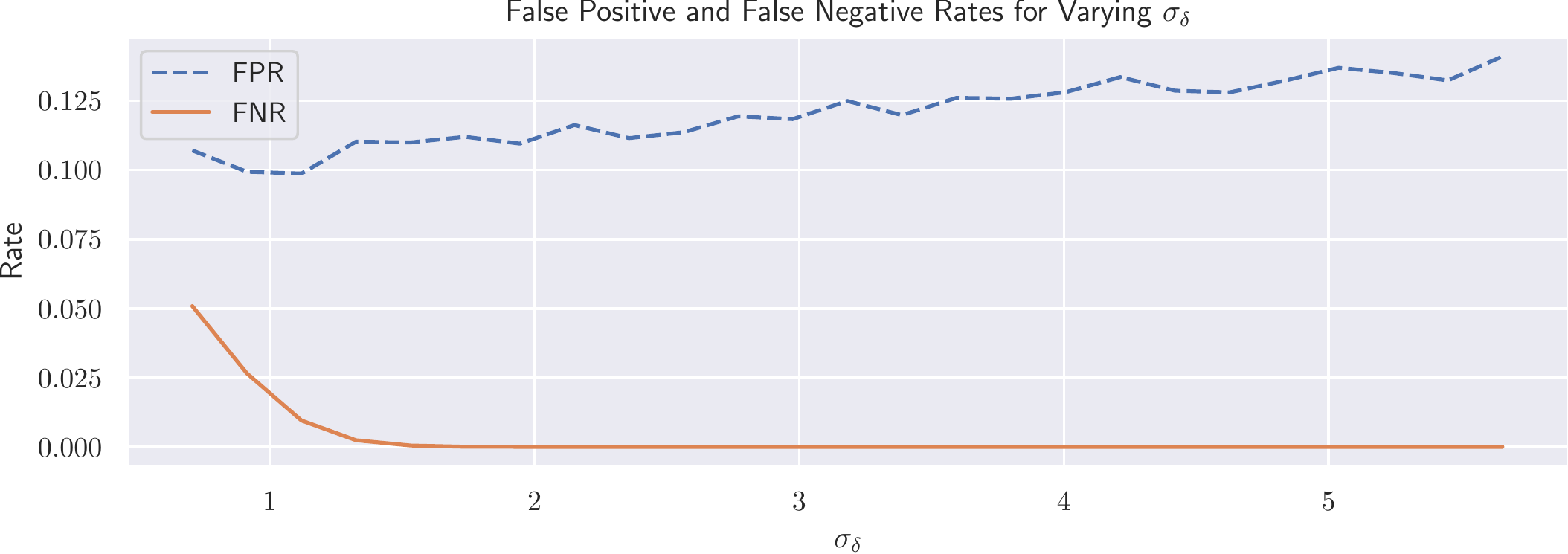}
\caption{Varying $s_{\delta}$ with $p = .05$.}
\label{fig:fnr-fpr-in-sigma-dirintf}
\end{subfigure}
\caption{False positive and false negative rates for the directed experiments of section \ref{sec:dirintf}. (a) Fixed $s_{\delta}$ and varying $p$. (b) Fixed $p$ and varying $s_{\delta}$.}
\label{fig:fnr-fpr-dirintf}
\end{figure}

\subsection{Computational Efficiency}

We conclude this section with empirical evidence of the computational efficiency of Algorithm \ref{altmin}. Figure \ref{fig:hist} displays a histogram of running times for the algorithm in the 154 experiments conducted in subsections \ref{sec:randint}-\ref{sec:dirintf}. The reported times are using a Python/Cython implementation of Algorithm \ref{altmin} on a laptop with Intel(R) Core(TM) i9-9980HK processor running at 2.40 GHz, 64.0 GB RAM, and Debian 11 operating system.

Recall that for all simulations in this section, we have assumed a sampling rate of 10 kHz and taken 100k observations corresponding to 10 seconds worth of data. This sampling rate is typical of what one might see in underwwater acoustics applications. In the context of analyzing 10 seconds worth of streaming data, the running times given in Figure \ref{fig:hist} are promising because they suggest that our proposed method for detecting interference could be used in real-time settings. Algorithm \ref{altmin} runs in less than $6$ seconds for all of the computational experiments in this section, with the vast majority completing in less than $3$ seconds. Since the running time for the algorithm is much less than the 10 seconds of data collected, our estimator is appropriate for use with streaming data. In those settings, running times could likely be further improved using hardware-specific implementations.

\begin{figure}[h]
\centering
\includegraphics[width=3in]{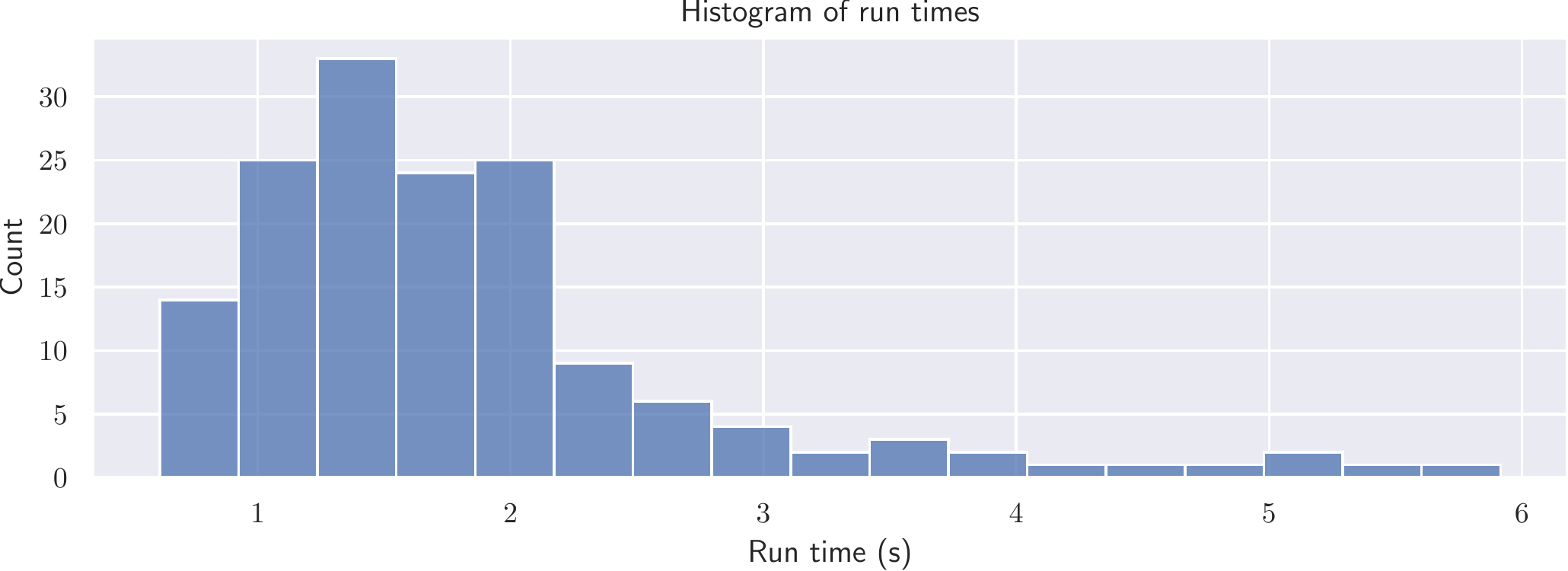}
\caption{Histogram of running times for experiments in sections \ref{sec:randint}-\ref{sec:dirintf}.}
\label{fig:hist}
\end{figure}

\section{Conclusion}\label{sec:conclusion}

In this paper, we propose a SLOPE-regularized estimator for signal subspace estimation in the presence of impulsive noise. We provide an alternating minimization algorithm for computing this estimator. We show that limit points of this algorithm exist and are local minimizers of the optimization problem which defines the estimator. Once an observation is estimated to contain nonzero interference, we propose to discard that observation and estimate the signal subspace on the remaining observations using one of a number of conventional signal subspace estimation methods. To justify the use of our estimator, we prove a result which provides finite sample control on the false discovery rate of the estimator--the expected ratio of incorrectly discarded observations to the total number of discarded observations. As opposed to previous false discovery rate results for group SLOPE norm regularized estimators, our result holds without any additional conditions on the distribution of the noise term or orthogonality design matrix conditions. We conclude with numerous simulations which test the performance of our estimator under a variety of conditions. These simulations support the efficacy of our proposed estimator for detecting and removing sparse interference in the context of signal subspace estimation.


\printbibliography

\begin{IEEEbiography}[{\includegraphics
[width=1in,height=1.25in,clip,
keepaspectratio]{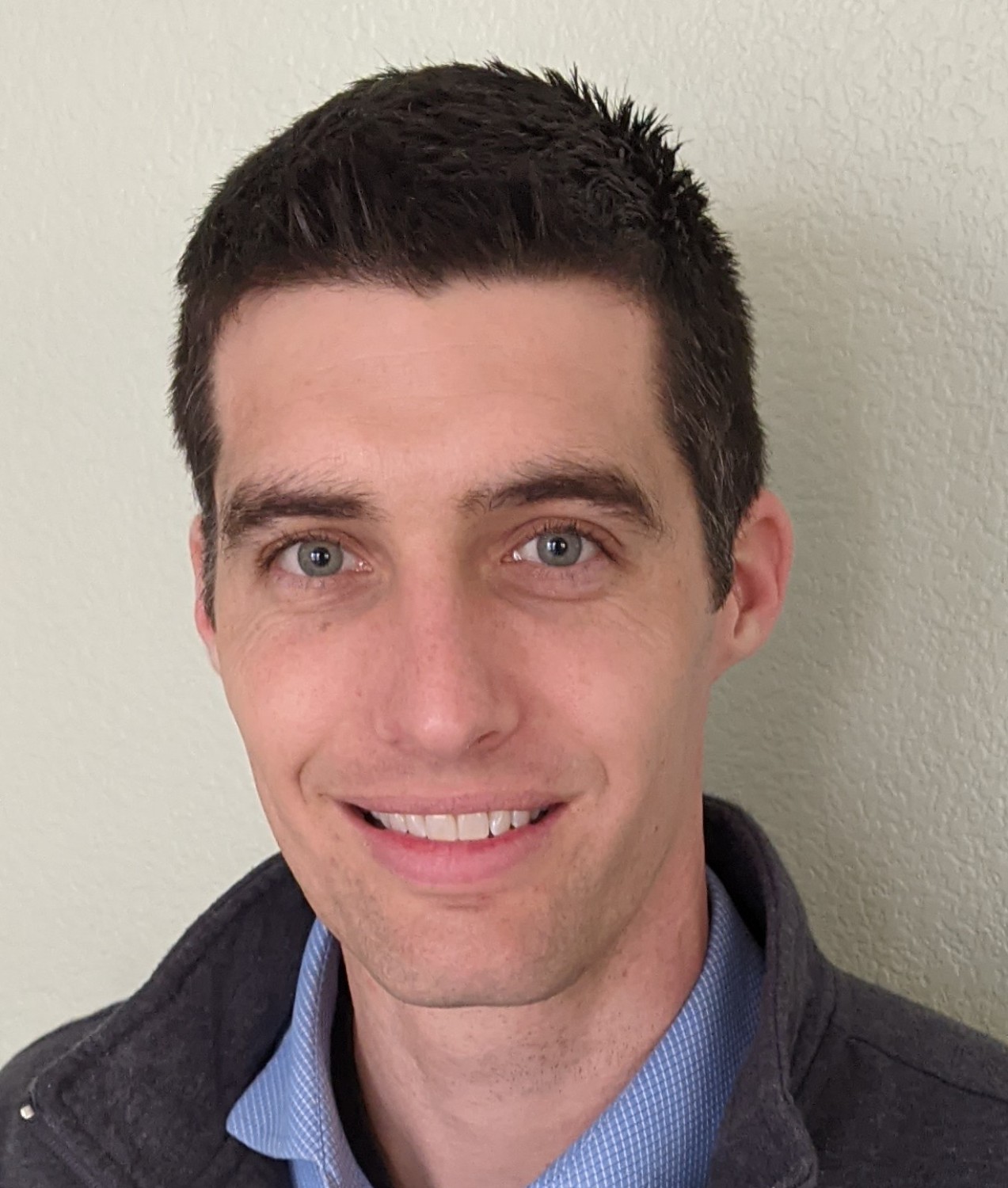}}]
{Robert L. Bassett} received the B.S. degree in mathematics, \emph{magna cum laude}, from the California State University, Bakersfield, in 2013 and the M.S. and Ph.D. degrees in mathematics from the University of California, Davis, in 2018.

He is an assistant professor of operations research at the Naval Postgraduate School. His current interests include applications of mathematical programming to problems in statistics and statistical signal processing. Dr. Bassett is primarily motivated by scientific problems relevant to issues of national security. His work has been funded by the Office of Naval Research and other agencies within the United States Department of Defense.

\end{IEEEbiography}

\begin{IEEEbiography}[{\includegraphics
[width=1in,height=1.25in,clip,
keepaspectratio]{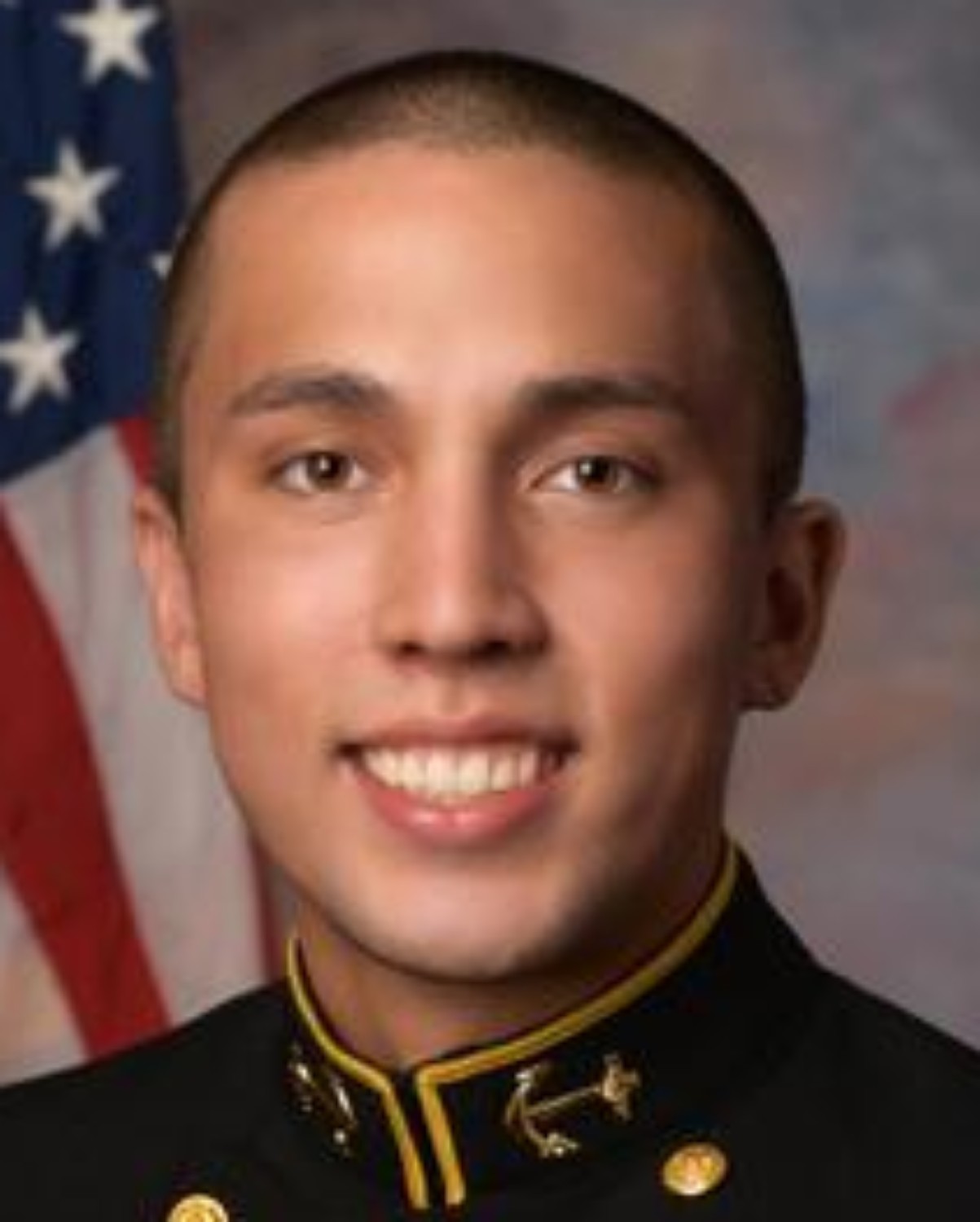}}]
{Micah Y. Oh} received the B.S. degree in applied mathematics from the United States Naval Academy in 2021 and the M.S. degree in operations research from the Naval Postgraduate School in 2022.

After graduating from the US Naval Academy in 2021, Micah Oh commissioned into his current position as an Ensign in the United States Navy. Outside of scientific pursuits, Ensign Oh is a competitive swimmer. In 2021, he was the US Naval Academy's first-ever winner of the NCAA's Elite 90 award, which is awarded to the student-athlete with the highest cumulative grade point average who competes at the NCAA championships.
\end{IEEEbiography}


\newpage

\end{document}